\newtheorem{theorem}{Theorem}[section]
\newtheorem{lemma}[theorem]{Lemma}
\newtheorem{proposition}[theorem]{Proposition}
\newtheorem{mydef}[theorem]{Definition}
\renewcommand\paragraph{\@startsection{paragraph}{4}{\z@}%
    {-3.25ex\@plus -1ex \@minus -.2ex}%
    {1.5ex \@plus .2ex}%
    {\normalfont\normalsize\bfseries}}
\begin{document}

\begin{frontmatter}

\title{\LARGE All Colors Shortest Path Problem}

\author{Yunus Can Bilge} 
\ead{can.bilge@std.ieu.edu.tr}

\author{Do\u{g}ukan \c{C}a\u{g}atay} 
\ead{dogukan.cagatay@std.ieu.edu.tr}

\author{Beg\"{u}m Gen\c{c}} 
\ead{begum.genc@std.ieu.edu.tr}

\author{Mecit Sar{\i}} 
\ead{mecit.sari@std.ieu.edu.tr}

\author{H\"useyin Akcan\corref{cor1}} 
\ead{huseyin.akcan@ieu.edu.tr}

\author{Cem Evrendilek}
\ead{cem.evrendilek@ieu.edu.tr}

\cortext[cor1]{Corresponding author. Tel: +90 232 488 8287, Fax: +90 232 488 8475}

\address{
    Izmir University of Economics 
    35330, Bal\c{c}ova, Izmir, Turkey 
}

\begin{abstract}
All Colors Shortest Path problem defined on an undirected graph aims at finding a
shortest, possibly non-simple, path where every color occurs at least once,
assuming that each vertex in the graph is associated with a color known in advance.
To the best of our knowledge, this paper is the first to define and investigate
this problem. Even though the problem is computationally similar to
generalized minimum spanning tree, and the generalized traveling salesman
problems, allowing for non-simple paths where a node may be visited multiple
times makes All Colors Shortest Path problem novel and computationally unique.
In this paper we prove that All Colors Shortest Path problem is NP-hard, and
does not lend itself to a constant factor approximation. We also propose several
heuristic solutions for this problem based on LP-relaxation, simulated annealing,
ant colony optimization, and genetic algorithm, and provide extensive simulations
for a comparative analysis of them. The heuristics presented are not the standard
implementations of the well known heuristic algorithms, but rather sophisticated
models tailored for the problem in hand. This fact is acknowledged by the very
promising results reported.

\end{abstract}

\begin{keyword}
NP-hardness, inapproximability, LP-relaxation, heuristic algorithms, simulated annealing,
ant colony optimization, genetic algorithm.
\end{keyword}

\end{frontmatter}

\section{Introduction}

Given an undirected edge weighted graph where each vertex has an apriori assigned
color, All Colors Shortest Path (\emph{ACSP}) problem is defined as a generic problem
in which the aim is to find a shortest possibly non-simple path that starts from a
designated vertex, and visits every color at least once.
As the same node might need to be visited multiple times, the path is not necessarily
simple. This makes \emph{ACSP} a novel and unique problem
that has never been studied before to the best of our knowledge. As the problem
is generic enough, it can be applied to a broad range of possible
areas including mobile sensor roaming, path planning, and item collection.

In this paper, we study \emph{ACSP} problem, prove that the problem is
NP-hard, and that a constant factor approximation algorithm cannot exist
unless $P=NP$. An ILP formulation is developed for \emph{ACSP}, and
elaborate heuristic solutions to this optimization problem are also provided.
These heuristics are based on LP-relaxation, simulated annealing, ant colony
optimization, and genetic algorithm. An experimental study is carried out to
compare them, and report the results.

The remainder of the paper is organized as follows. In Section~\ref{sec:related},
we discuss the related work, and position our paper with respect to the state of
the art. In Section~\ref{sec:define}, we formally define the problem, and provide
the intractability proof along with an inapproximability result. Section~\ref{sec:ILP}
presents an ILP formulation for \emph{ACSP}. In Section~\ref{sec:heuristics},
we discuss the heuristic solutions we propose.
The experimental results are presented in
Section~\ref{sec:experiments}, and the paper is concluded in
Section~\ref{sec:conclusion}. 

\section{Related Work}\label{sec:related}

\emph{ACSP}, defined and investigated in this paper, has actually features that
make it look similar to a variety of problems studied extensively in the
literature, each of which, however, has one or more discrepancies making \emph{ACSP}
computationally unique. Among these, Generalized Minimum Spanning Tree
(\emph{GMST}) problem introduced in \cite{MLT95} is probably the most similar to
\emph{ACSP}. Given an undirected graph partitioned into a number of disjoint clusters,
\emph{GMST} problem is defined to be the problem of finding the minimum cost spanning tree
with exactly one node from every cluster. This problem has been shown to be
NP-hard in \cite{MLT95}, and some inaproximability results are presented in
\cite{P04}. Integer Linear Programming (\emph{ILP}) formulations for this problem
are presented in \cite{FLL02}, \cite{PSK05}, and \cite{PKS06}.
There exist formulations for also a variant of \emph{GMST} in \cite{DHC00}
and \cite{IRW99} where \emph{at least one} instead of \emph{exactly one}
node from each cluster is visited. We refer to the latter version as \emph{$\ell$-GMST}.
Even though there are such formulations, \emph{ACSP} still differs in the shape
of the solution. While \emph{ACSP} outputs a possibly non-simple path,
\emph{$\ell$-GMST} returns a tree.
Moreover, it can be easily noted that a minimum spanning tree returned
by \emph{$\ell$-GMST} can only give a rough estimate for the size of a
possibly non-simple shortest path visiting all the colors even when
\emph{ACSP} is required to return to the base it starts off as shown
in Figure~\ref{ACSPvsl-GMST}. When the nodes with the same color
are perceived as disjoint clusters so as to interpret this figure
as an instance of \emph{$\ell$-GMST}, the tree spanning nodes $1$ through
$6$ is the optimal solution to it with cost $5$. 

\begin{figure} [h!]
\centering
\includegraphics[width=\columnwidth] {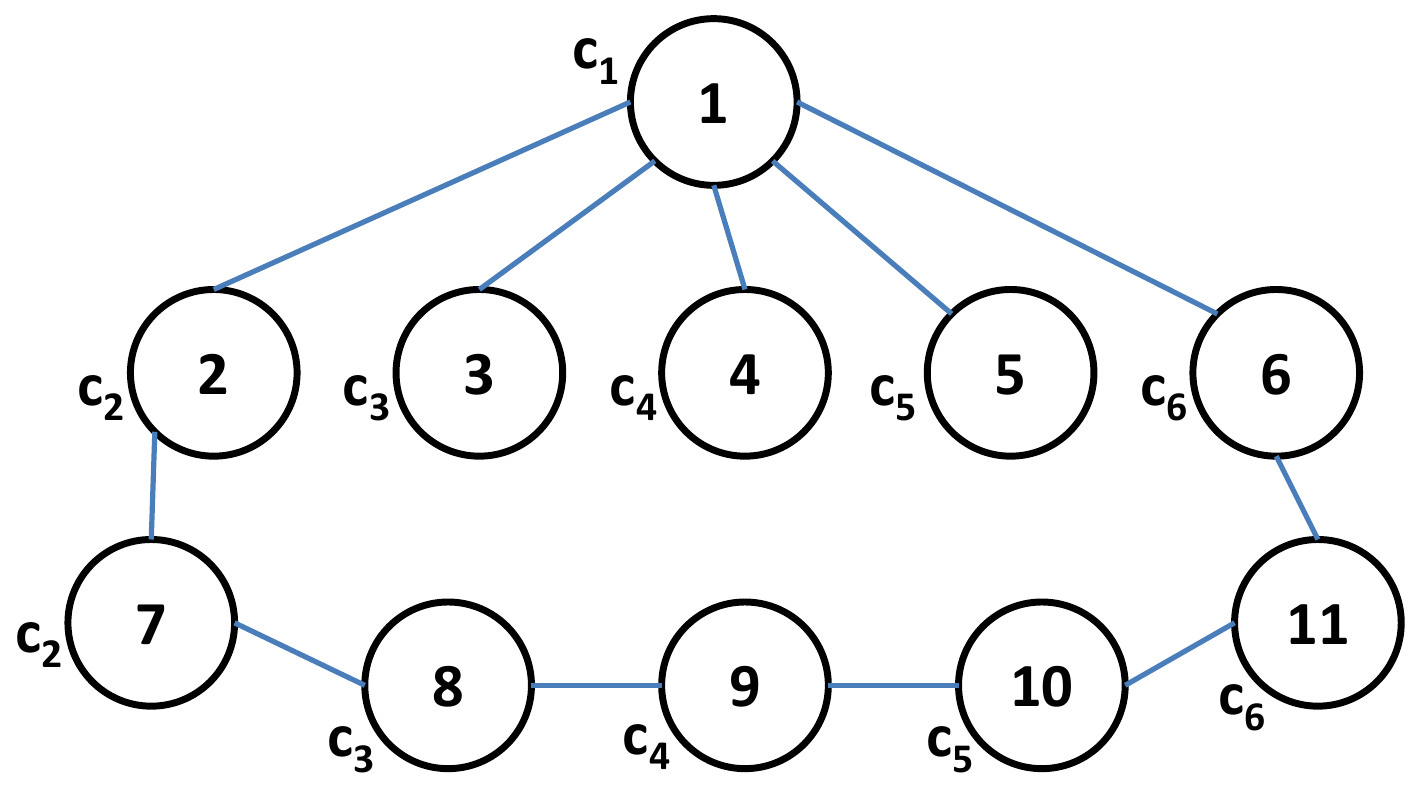}
\caption{An example graph corresponding to an instance of \emph{ACSP}. All the edges have
a weight of $1$, and the colors assigned to the nodes are shown next to them. Node $1$ is designated
as the base. The shortest path for this instance of \emph{ACSP} is $1, 2, 7, 8, 9, 10, 11$ which
has a length of $6$. When the path is constrained to return
to the base, however, the path length of the solution becomes 8.}
\label{ACSPvsl-GMST}
\end{figure}

Another problem seemingly similar to \emph{ACSP} is Generalized Traveling
Salesman Problem \emph{(GTSP)} formulated first in \cite{L69}. Given a group of
possibly intersecting clusters of nodes, \emph{GTSP} tries to find a shortest
Hamiltonian tour with at least one (or exactly one) visit to a node from every
cluster. An integer linear programming formulation for \emph{GTSP} when the distance
matrix is asymmetrical is given in \cite{LMN87}. In \cite{LMW93}, it is shown
that a given instance of \emph{GTSP} can be transformed into an instance of standard
\emph{TSP}. In \cite{FGT97}, \emph{GTSP} is noted to be NP-hard as standard
\emph{TSP} is a
specialization of \emph{GTSP} with clusters in the form of singleton nodes. It is also
surprising to note as \cite{BM02} demonstrates that \emph{GTSP} can be transformed
into standard \emph{TSP} very efficiently with the same number of nodes, but with a modified
distance matrix. \emph{ACSP} differs from also these variants of \emph{GTSP}, in that,
the nodes may be visited multiple times, and the path returned need not be a cycle.

\section{The Problem Definition}\label{sec:define}

\emph{ACSP} is modeled as a graph problem. The input to the problem is an undirected
edge weighted graph where each vertex is assigned a color known in advance. The goal
is then to find the shortest possibly non-simple path that visits every distinct color at least
once in this graph. The formal definition of the problem is given as:

\begin{mydef}
Given an undirected graph $G(V,E)$ with a color drawn from a set $C$ of colors assigned to
each node, and a non-negative weight associated with each edge, \emph{ACSP} is the problem
of finding the shortest (possibly non-simple) path starting from a designated base
node $s \in V$ such that every color occurs at least once on the path.
\end{mydef}

The weights $w_{i, j}$ where $(i, j) \in E$ in $G$ correspond to distances. We will use
the words \emph{weight}, \emph{cost}, and \emph{distance} interchangeably
throughout the paper. The cost of a solution to an instance of \emph{ACSP} is
simply the length of the path returned.

\emph{ACSP} can easily be shown to be NP-hard by a trivial polynomial time
reduction from \emph{Hamiltonian Path} (HP) problem which is well-known to be
NP-complete \cite{GJ79}. Given an undirected graph $G(V, E)$, \emph{HP} is
defined to be the problem of deciding whether it has a Hamiltonian path, namely,
a simple path that visits every node in the graph exactly once.

\subsection{NP-hardness of ACSP}

Given an instance of \emph{HP}, it can be transformed to the corresponding
instance of \emph{ACSP} as follows: Let the graph in the given \emph{HP} instance
be denoted by $G(V, E)$. A new graph $G'(V \cup \{s\},E \cup
\{(s, v) | v \in V \})$ is obtained by adding to $G$ a new node $s$, and also the
edges from $s$ to all the original nodes in $G$. Next, a distinct color from $C=\{c_1,
c_2, ..., c_{|V|+1} \}$ is assigned to each and every node in $G'$. The weights
associated with all the edges in $G'$ are finally set to one. We can now state
the following lemma.

\begin{lemma}\label{lemma:HPiffACSP}
A given instance of \emph{HP} represented with $G(V,E)$ has a solution if and
only if the corresponding instance of \emph{ACSP} obtained through the lines of
transformation just depicted has a solution with length $|V|$.
\end{lemma}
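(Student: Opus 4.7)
The plan is to establish both directions of the biconditional by exploiting the fact that in $G'$ every node carries a \emph{distinct} color, so visiting all $|C|=|V|+1$ colors is equivalent to visiting all $|V|+1$ nodes of $G'$. Combined with the length bound $|V|$, this forces any candidate path to be simple and to touch each node exactly once.

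First I would argue the forward direction. Assume $G$ admits a Hamiltonian path $v_{i_1}, v_{i_2}, \ldots, v_{i_{|V|}}$. Since $s$ is connected to every vertex of $V$ in $G'$, the path $s, v_{i_1}, v_{i_2}, \ldots, v_{i_{|V|}}$ is a valid walk in $G'$ starting at the base. It uses exactly $|V|$ unit-weight edges, hence has length $|V|$, and visits each of the $|V|+1$ nodes (and therefore each of the $|V|+1$ colors) at least once, giving a feasible \emph{ACSP} solution of length $|V|$.

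For the reverse direction, suppose the constructed \emph{ACSP} instance has a solution of length $|V|$ starting at $s$. Because all edges weigh one, the path traverses exactly $|V|$ edges and therefore lists at most $|V|+1$ vertices (counted with repetition). Feasibility requires all $|V|+1$ distinct colors to appear, and since each color belongs to a unique node, every node of $G'$ must be listed, and must in fact be listed exactly once. The path is thus a simple Hamiltonian path of $G'$ beginning at $s$. Deleting this initial $s$ yields a sequence $v_{i_1}, \ldots, v_{i_{|V|}}$ of all nodes of $V$ in which every consecutive pair is joined by an edge of $G' \setminus \{s\} = G$ (the only $s$-incident edge used is $(s, v_{i_1})$), producing a Hamiltonian path in $G$.

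The only subtlety, and the single point one must be careful about, is the double-counting argument: a walk of length $|V|$ touches at most $|V|+1$ vertices with multiplicity, and the coloring was chosen so that the $|V|+1$ color constraints are \emph{tight}, which simultaneously forces simplicity of the walk and saturates the added edge from $s$. Once this observation is in place, the two implications are essentially bookkeeping, and the reduction is clearly polynomial (linear in the size of $G$), completing the proof of the lemma.
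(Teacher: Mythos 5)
Your proposal is correct and follows essentially the same route as the paper: prepend $s$ to a Hamiltonian path for the forward direction, and for the converse use the count of $|V|+1$ distinct colors against the at most $|V|+1$ vertices on a walk of length $|V|$ to force each node to appear exactly once, then delete $s$. Your write-up just makes the counting step slightly more explicit than the paper's.
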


\begin{proof}
Let us first prove the only if part. When the given instance of \emph{HP} has a
solution, there must exist a Hamiltonian path $P$ in $G$ given by
$v_{\pi(1)}v_{\pi(2)} ... v_{\pi(i)}v_{\pi(i+1)} ... v_{\pi(|V|)}$ of length
$|V|-1$. As $P$ is a Hamiltonian path, the permutation $\pi$ of nodes in $V$ is
such that the edges $( v_{\pi(i)}, v_{\pi(i+1)}) \in E$ for all $ i \in
\{1..|V|-1\}$. If we let $C$, and $G'(V', E')$ denote the set of $|V|+1$ colors,
and the transformed graph respectively in the corresponding instance of
\emph{ACSP}, it is then possible to construct the path $P'=sP$ in $G'$ with
total path length $|V|$ where $s \in V$ is designated as the base node.
This is apparently the shortest path visiting all distinct colors at least once.

In order to prove the if part, let us assume that we have a shortest path
of length $|V|$ that starts with node $s$ in the corresponding instance of
\emph{ACSP}. Since the total number of colors that needs to be visited is
$|V|+1$, each distinct color, and hence, the corresponding node occurs
exactly once on this path. The removal of node $s$ readily specifies a
Hamiltonian path in $G$ of the given \emph{HP} instance.
\end{proof}

The following theorem can hence be stated now.

\begin{theorem}\label{theorem:np-hard}
\emph{ACSP} is NP-hard.
\end{theorem}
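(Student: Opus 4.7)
The plan is to derive the theorem as an almost immediate corollary of Lemma~\ref{lemma:HPiffACSP} combined with the NP-completeness of \emph{HP}. Since \emph{ACSP} is an optimization problem, I would first implicitly pass to its natural decision version, namely: given an instance of \emph{ACSP} and an integer $k$, does there exist an all-colors path of length at most $k$? NP-hardness of this decision version immediately yields NP-hardness of the optimization version in the sense used in the paper.

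Next, I would argue that the transformation described just before Lemma~\ref{lemma:HPiffACSP}, which takes $G(V,E)$ to $G'(V\cup\{s\},E\cup\{(s,v)\mid v\in V\})$, assigns distinct colors from $C=\{c_1,\dots,c_{|V|+1}\}$ to the nodes, and sets every edge weight to $1$, can be carried out in time polynomial in $|V|+|E|$: we add one vertex, at most $|V|$ new edges, $|V|+1$ color labels, and $|E|+|V|$ unit weights, all of which is clearly $O(|V|+|E|)$ work. So the mapping is a valid polynomial-time reduction.

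I would then invoke Lemma~\ref{lemma:HPiffACSP} with the threshold $k=|V|$: the lemma asserts that $G$ has a Hamiltonian path if and only if the transformed instance admits an all-colors path of length exactly $|V|$ starting at $s$. Because every edge in $G'$ has weight $1$ and every color in $C$ must appear on the path, no all-colors path from $s$ can be shorter than $|V|$ (it must touch $|V|+1$ distinct colored vertices and hence traverse at least $|V|$ edges). Therefore ``length at most $|V|$'' and ``length exactly $|V|$'' coincide for this instance, and the equivalence in Lemma~\ref{lemma:HPiffACSP} becomes exactly the yes/no equivalence required of a Karp reduction from \emph{HP} to the decision version of \emph{ACSP}.

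Since \emph{HP} is NP-complete \cite{GJ79} and the above is a polynomial-time many-one reduction from \emph{HP} to \emph{ACSP}, the theorem follows. I do not anticipate any real obstacle here; the only subtlety worth flagging explicitly is the observation in the previous paragraph that $|V|$ is also a lower bound on the length of any feasible path, so that the lemma's ``length $|V|$'' phrasing cleanly matches the ``length at most $k$'' form of the decision version.
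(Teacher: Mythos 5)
Your proposal is correct and follows essentially the same route as the paper, which simply cites Lemma~\ref{lemma:HPiffACSP} as giving the result directly; you merely spell out the standard details (polynomiality of the reduction, passage to the decision version, and the observation that $|V|$ is a lower bound so that ``length $|V|$'' matches ``length at most $|V|$''), all of which are sound.
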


\begin{proof}
It is a direct consequence of Lemma~\ref{lemma:HPiffACSP}.
\end{proof}

Having learned about the NP-hardness of \emph{ACSP}, a possible next step
is to explore its approximability. With this objective in mind, our attention was
drawn to \emph{$\ell$-GMST} problem having a similar computational structure.
While \emph{$\ell$-GMST} looks for the minimum cost spanning tree, \emph{ACSP}
seeks out a possibly non-simple path with at least one node from every cluster
provided that the nodes with the same color are interpreted as disjoint clusters.
The following observation is first made to associate the optimal values of the
respective solutions attained by both problems when fed with the same input.
It is then used to report a result regarding the approximability of \emph{ACSP}.

\begin{proposition}\label{prop:sandwich}
Let $I$ correspond to an input identified by an undirected edge weighted
graph $G(V, E)$, and a function $\kappa:V \to \{1, 2, ..., k\}$ mapping the
vertices to colors.
For $1 \le i \le k$, $V_i=\{v \in V|\kappa(v)=i\}$ induce clearly a set of
$k$ disjoint clusters, which in turn allows for a proper interpretation of $I$
by $\ell$-GMST. Then,
\[
\text{opt}_{\ell\text{-GMST }}(I) \le
\min_{j \in V} \{ \text{opt}_{\text{ACSP }}(I_j) \} <
2 * \text{opt}_{\ell\text{-GMST }}(I)
\]
holds for all valid instances $I$, where $I_j$ is obtained from $I$ by designating
$j \in V$ as the base node, and $\text{opt}_{A}$ returns the cost of the optimal
solution to its argument interpreted as
an instance of either one of the two problems as dictated by the subscript $A$.
\end{proposition}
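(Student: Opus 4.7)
The plan is to prove the two inequalities independently. The left one is a straightforward ``flattening'' argument: any ACSP path induces an $\ell$-GMST-feasible tree of no larger weight. The right one is a classical doubled-tree construction: an optimal $\ell$-GMST tree can be traversed by a DFS walk of length exactly twice the tree's weight, and truncating it at the last newly visited vertex yields a feasible ACSP path strictly shorter than $2\,\text{opt}_{\ell\text{-GMST}}(I)$.

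For the left inequality, I fix $j^{*}\in V$ minimizing $\text{opt}_{\text{ACSP}}(I_j)$ and let $P$ be an optimal ACSP path for $I_{j^{*}}$. Let $E_P\subseteq E$ be the set of distinct edges traversed by $P$, and $V_P$ the set of vertices visited by $P$. Because $P$ is a walk starting at $j^{*}$ and using only edges in $E_P$, the subgraph $(V_P,E_P)$ is connected; because $P$ visits every color, $V_P$ meets every cluster $V_i$. Any spanning tree $T$ of $(V_P,E_P)$ is therefore a feasible $\ell$-GMST solution on $I$, with weight
\[
w(T)\;\le\;\sum_{e\in E_P} w_e\;\le\; w(P)\;=\;\text{opt}_{\text{ACSP}}(I_{j^{*}}),
\]
the second inequality holding because the length of $P$ counts each edge with multiplicity. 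This yields $\text{opt}_{\ell\text{-GMST}}(I)\le \min_{j\in V}\text{opt}_{\text{ACSP}}(I_j)$.

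For the right inequality, I take an optimal $\ell$-GMST tree $T^{*}$ of weight $W=\text{opt}_{\ell\text{-GMST}}(I)$ and pick any $j\in V(T^{*})$ as the base. A depth-first traversal of $T^{*}$ rooted at $j$ produces an Euler tour that visits every vertex of $T^{*}$, uses each tree edge exactly twice, and returns to $j$, so its length is exactly $2W$. Let $v$ be the last vertex to be visited for the first time in this traversal, and truncate the tour at $v$. The resulting prefix is a walk in $G$ starting at $j$ that still visits every vertex of $T^{*}$, and hence every color, so it is ACSP-feasible for $I_j$. Its length is $2W - d_{T^{*}}(v,j)$, giving
\[
\min_{j\in V}\text{opt}_{\text{ACSP}}(I_j)\;\le\;2W - d_{T^{*}}(v,j)\;<\;2\,\text{opt}_{\ell\text{-GMST}}(I)
\]
whenever $d_{T^{*}}(v,j)>0$.

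The one technical obstacle is securing the strictness of the right-hand inequality. It suffices to choose $j$ and the DFS order so that $v$ lies across at least one positive-weight edge from $j$ in $T^{*}$: root $T^{*}$ at an endpoint of a heaviest edge and process the opposite subtree last, forcing the last new vertex to sit on the far side of that edge. This works whenever $W>0$, which is the only regime in which the strict inequality is meaningful; in the fully degenerate case $W=0$ both sides vanish and only equality is available, a caveat I would note explicitly as an implicit non-degeneracy assumption of the statement.
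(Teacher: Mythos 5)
Your proof is correct and follows essentially the same route as the paper's: the left inequality by pruning an optimal \emph{ACSP} walk down to a spanning tree of its edge set, and the right one by truncating a depth-first Euler walk of an optimal $\ell$-GMST tree at the last newly discovered vertex. You are in fact more careful than the paper about the strictness of the right-hand inequality --- choosing the root and DFS order so that a positive-weight edge is the one saved, and flagging that when $\text{opt}_{\ell\text{-GMST}}(I)=0$ the strict inequality genuinely fails --- a degenerate case the paper's argument (which only notes that \emph{some} edge is traversed once) silently glosses over.
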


\begin{proof}
Let us assume that the first inequality in the proposition does not hold,
and, there is an instance $I$ for which
$\textit{opt}_{\ell\textit{-GMST }}(I) >
\min_{j \in V} \{ \textit{opt}_{\textit{ACSP }}(I_j) \}$.
Let us suppose that $s \in V$ is a node that minimizes the right-hand
side of this inequality. In that case, the solution to \emph{ACSP} with cost
$\textit{opt}_{\textit{ACSP }}(I_s)$ can be easily reworked,
by simply eliminating any cycles, and duplicate edges, into a tree $T'$.
$T'$ is clearly a solution to \emph{$\ell$-GMST} for the instance $I$
with cost less than $\textit{opt}_{\ell\textit{-GMST }}(I)$, and hence,
contradicting the assumption.

Let us assume now the latter inequality does not hold.
This, for at least one instance of input $I$, leaves us with
$\min_{j \in V} \{ \textit{opt}_{\textit{ACSP }}(I_j) \} \ge
2 * \textit{opt}_{\ell\textit{-GMST }}(I)$. Let us also
assume that the tree $T'(V', E')$ is a solution to \emph{$\ell$-GMST}
with cost $\textit{opt}_{\ell\textit{-GMST }}(I)$ for instance $I$.
Rooting $T'$ at some $s \in V'$, a possibly non-simple path starting
from $s$ could be constructed visiting all the nodes in it by a depth-first search.
This path, however, forms a solution to \emph{ACSP} for instance $I_s$
with cost strictly less than $2 * \textit{opt}_{\ell\textit{-GMST }}(I)$
as no edge gets visited more than twice,
and there exists at least one edge that is visited exactly once given that
the return to the base is not performed upon hitting the last leaf node in $V'$.
This, however, contradicts the assumption.
\end{proof}

It is shown in \cite{IRW99} that \emph{$\ell$-GMST}, referred to as CLASS TREE problem
in the paper, does not have a constant-factor polynomial time approximation algorithm (\emph{apx})
unless $P=NP$.

\begin{theorem}\label{theorem:inapproximable}
\emph{ACSP} does not have a constant-factor polynomial time approximation algorithm
unless $P=NP$.
\end{theorem}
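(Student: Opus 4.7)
The plan is to argue by contradiction via a reduction that transfers an approximation algorithm for \emph{ACSP} into one for $\ell$-GMST, exploiting the sandwich inequality in Proposition~\ref{prop:sandwich} and then invoking the inapproximability result of \cite{IRW99}. Suppose, toward a contradiction, that \emph{ACSP} admits a polynomial time $c$-approximation algorithm $\mathcal{A}$ for some constant $c \ge 1$. I would then build a polynomial time $2c$-approximation algorithm $\mathcal{B}$ for $\ell$-GMST, which contradicts \cite{IRW99} unless $P=NP$.

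The construction of $\mathcal{B}$ proceeds as follows. Given an instance $I$ of $\ell$-GMST, interpret the clusters as colors via the coloring $\kappa$ implicit in Proposition~\ref{prop:sandwich}. For every candidate base $j \in V$, run $\mathcal{A}$ on the \emph{ACSP} instance $I_j$ to obtain a path $P_j$ whose length satisfies $|P_j| \le c \cdot \text{opt}_{\text{ACSP}}(I_j)$. Convert $P_j$ into a tree $T_j$ using exactly the construction employed in the first half of the proof of Proposition~\ref{prop:sandwich}, namely by discarding duplicate edges and breaking any cycles; this can be done in polynomial time, produces a feasible $\ell$-GMST solution (since $P_j$ already touches every color), and has cost at most $|P_j|$. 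Finally, $\mathcal{B}$ returns the cheapest $T_j$ over all $j \in V$.

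To bound the approximation ratio, chain the inequalities. Let $T^\ast$ be the tree output by $\mathcal{B}$. Then
\[
\text{cost}(T^\ast) \;\le\; \min_{j \in V}\,|P_j| \;\le\; c \cdot \min_{j \in V}\, \text{opt}_{\text{ACSP}}(I_j) \;<\; 2c \cdot \text{opt}_{\ell\text{-GMST}}(I),
\]
where the last step uses the upper bound in Proposition~\ref{prop:sandwich}. Since $2c$ is a constant and $\mathcal{B}$ runs in polynomial time (it invokes $\mathcal{A}$ a linear number of times and performs polynomial postprocessing), this contradicts the fact, proven in \cite{IRW99}, that $\ell$-GMST admits no constant-factor polynomial time approximation unless $P=NP$. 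Hence no such $\mathcal{A}$ exists.

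The argument is essentially a packaging of Proposition~\ref{prop:sandwich}, so there is no serious technical obstacle; the only point requiring mild care is verifying that the path-to-tree surgery preserves feasibility and cost while remaining polynomial, but this is already handled by the first half of the proposition's proof. The only other subtlety is the standard caveat that iterating over all bases $j \in V$ inflates the runtime by a polynomial factor, which is harmless for the constant-factor inapproximability conclusion.
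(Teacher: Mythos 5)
Your proposal is correct and follows essentially the same route as the paper: assume a $c$-approximation for \emph{ACSP}, run it from every candidate base $j \in V$, take the minimum, and combine with the sandwich inequality of Proposition~\ref{prop:sandwich} to obtain a $2c$-approximation for $\ell$-GMST, contradicting \cite{IRW99}. The one place you are slightly more explicit than the paper is in converting the returned path $P_j$ into an actual feasible tree $T_j$ of no greater cost before outputting it --- the paper argues only about the numerical value $\min_j \textit{apx}_{\textit{ACSP}}(I_j)$ and leaves this feasibility step implicit --- but this is a refinement of the same argument, not a different one.
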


\begin{proof}
Let us assume, to the contrary, that \emph{ACSP} has an apx denoted
by $\textit{apx}_{\textit{ACSP}}$. Based on this assumption, an apx
for \emph{$\ell$-GMST} can be shown to also exist, and hence a
contradiction, as follows.

Given any valid input $I$ for \emph{$\ell$-GMST}, consisting of an
undirected graph $G(V, E)$ along with disjoint clusters $V_i \subseteq V$
with $1 \le i \le k$, we denote by $I_j$ the input
for \emph{ACSP} obtained from $I$ by designating $j \in V$ as the base.
The initial assumption with regard to the existence of an apx suggests by definition
\[
\textit{opt}_{\textit{ACSP }}(I_j) \le
\textit{apx}_{\textit{ACSP }}(I_j) \le
c*\textit{opt}_{\textit{ACSP }}(I_j)
\]
for some constant $c>1$, and all valid input $I_j$ where $j \in V$. Taking
the minimum over all $j \in V$, we obtain

{\footnotesize
\[
\min_{j \in V} \{ \textit{opt}_{\textit{ACSP}}(I_j) \} \le
\min_{j \in V} \{ \textit{apx}_{\textit{ACSP}}(I_j) \} \le
c*\min_{j \in V} \{ \textit{opt}_{\textit{ACSP}}(I_j) \}.
\]
}

\noindent Combining this result with Proposition~\ref{prop:sandwich},
\[
\textit{opt}_{\ell\textit{-GMST }}(I) \le
\min_{j \in V} \{ \textit{apx}_{\textit{ACSP }}(I_j) \} <
2*c*\textit{opt}_{\ell\textit{-GMST }}(I)
\]
is readily obtained. It should be noted that the minimization
over $\textit{apx}_{\textit{ACSP }}(I_j)$ involves running
the constant-factor approximation for \emph{ACSP} separately
for each $j \in V$, and the total time, even though amplified by
a factor of $|V|$, is still polynomial in the size of a given instance.
Therefore, the last inequality implies, by definition, a $2c$-factor
apx for \emph{$\ell$-GMST}.
This, however, is a contradiction, and hence, the proof.
\end{proof}

\section{ILP Formulation of \emph{ACSP}}\label{sec:ILP}
In this section, an Integer Linear Programming formulation of \emph{ACSP}
is presented. To this end, we start by making the following observation first.

\begin{proposition}\label{proposition:maxOneVisit}
In an optimal solution to any instance of \emph{ACSP}, no edge can be visited
more than once in any given direction.
\end{proposition}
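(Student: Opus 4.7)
I will argue by contradiction by performing a local surgery on the walk that removes both traversals of the repeated directed edge while still visiting every original vertex. Suppose $P = v_0 v_1 \cdots v_m$ is an optimal walk with $v_0 = s$, and that some directed edge $(u,v)$ is traversed at two distinct positions $i < j$, meaning $v_i = v_j = u$ and $v_{i+1} = v_{j+1} = v$. The case $j = i+1$ would force $u = v$, which we may rule out by the absence of self-loops, so $j \ge i+2$.

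The construction I would use is the reverse-and-splice walk
\[
P' \;=\; v_0,\, v_1,\, \ldots,\, v_i,\, v_{j-1},\, v_{j-2},\, \ldots,\, v_{i+2},\, v_{i+1},\, v_{j+2},\, v_{j+3},\, \ldots,\, v_m,
\]
that is, traverse the prefix up to $v_i$, walk the interior $v_{i+1}\cdots v_j$ in reverse (ending at $v_{i+1}=v$), and then splice directly onto the suffix from $v_{j+2}$. Adjacency is preserved at the first splice because the undirected edge $\{v_{j-1},v_j\}$ lies in $G$ and $v_i = v_j$, at the second splice because $\{v_{j+1},v_{j+2}\}$ lies in $G$ and $v_{i+1} = v_{j+1}$, and throughout the reversed interior because the same undirected edges of $G$ are used with the opposite orientation. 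A naive shortcut that just elided the closed sub-walk $v_i \cdots v_j$ at $u$ would throw away every vertex visited in its interior and risk losing colors; the reversal avoids this by keeping each of those vertices, only in the reverse order.

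Three things then need to be checked: (i) $P'$ is a valid walk rooted at $s$; (ii) $P'$ visits every vertex visited by $P$, so it covers the same set of colors and is feasible for \emph{ACSP}; and (iii) a direct edge-weight accounting gives $\text{length}(P') = \text{length}(P) - 2\,w(u,v)$, since the reversed interior contributes the same total weight as the original interior and the two $(u,v)$-traversals are discarded. Under positive edge weights this strictly contradicts the optimality of $P$, and under merely non-negative weights the same construction yields an optimal walk without the repetition, which is the intended reading. The only real obstacle is spotting the reverse-and-splice construction itself; everything else is mechanical bookkeeping.
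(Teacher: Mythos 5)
Your reverse-and-splice construction is exactly the paper's argument: the authors write the walk as $p = s,x,i,j,y,i,j,z$ and replace it with $p' = s,x,i,y^R,j,z$, which is your $P'$ in different notation, saving $2w_{i,j}$. The proposal is correct and takes essentially the same approach; your added remarks on the self-loop degenerate case and on zero-weight edges are minor refinements the paper leaves implicit.
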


\begin{proof}
We assume that $p$ is a possibly non-simple path with the shortest distance,
forming a solution to a given instance of \emph{ACSP}. Contrary to the proposition,
we proceed by assuming that an edge $(i, j)$ is traversed more than once in the
direction from node $i$ to node $j$. Highlighting the first two occurrences of this
edge, then, the path can be represented as $p=s,x,i,j,y,i,j,z$ where $s$ is the
base, and $x$, $y$, and $z$ are sequences of zero or more nodes with edges
in between consecutive nodes. It should be noted that neither $x$ nor $y$
are allowed, by the assumption, to have
any occurrences of $i$, and $j$ consecutively in this order. We can, in that case,
construct a new path $p'=s,x,i,y^R,j, z$ with $y^R$ corresponding, in reverse
order, to the sequence of nodes in $y$. This new path, $p'$, visiting the same
set of nodes as $p$, however, is shorter by $2*w_{i,j}$ than $p$. This contradicts
the optimality of $p$, and hence, proving the proposition.
\end{proof}

Proposition~\ref{proposition:maxOneVisit} allows for an ILP formulation to
\emph{ACSP} where tracking down whether an edge is visited
as part of an optimal solution
in either one of the two possible directions
becomes possible by employing a binary decision variable.
This observation, coupled with the motivation to come up
with a compact ILP model, form the basis of the transformation
to be described next. At the heart of the transformation is
the replacement of each undirected edge in a given instance
of \emph{ACSP} with two directed edges, and hence the adoption
of a directed graph view as a substitute in the ILP formulation.

Let us assume that an instance of
\emph{ACSP} is given, as determined by an undirected edge weighted graph $G(V, E)$,
the designated base vertex $s \in V$, and $\kappa:V \to C$ mapping the vertices
$V=\{1, ..., n\}$ to colors $C=\{1, ..., k\}$. Finally, the weights associated with the edges
in $G$ are denoted by $w_{i, j}$ for all unordered pairs $(i, j) \text{ (or } \{i, j\} \text{)} \in E$.
It is therefore implicitly assumed that $w_{i, j}=w_{j, i}$ for all $(i, j) \in E$.

In transforming $G(V, E)$ to a directed graph $G'(V', E')$ to be used in the ILP formulation,
we first introduce two new nodes numbered $0$ as the source, and $n+1$ as the sink,
setting effectively $V'= V \cup \{0, n+1\}$ in $G'$. Besides, the source, and the sink are both
assigned to a new color $0$, extending the color set to $C' = C \cup \{0\}$. With the addition
of the new color, $\kappa$ is also augmented accordingly with
$\kappa(0)=\kappa(n+1)=0$. Then, a directed edge $(0, s)$ from the new source
to the base $s$, as well as directed edges $(i, n+1)$ to the sink, for all $i \in V$ in $G$, are added
into $G'$ with their weights set to $0$. Lastly, each undirected edge $(i, j) \in E$ is replaced by two
directed edges $(i, j)$ and $(j, i)$ in $G'$ with both of whose weights initialized to the weight of
the original edge. With this final step, the transformation sets
$E' = \{(i, j), (j, i) | (i, j) \in E\} \cup \{(0, s)\} \cup \{(i, n+1) | i \in V\}$ in $G'$.
Continuing to use the same notation for weights in $G'$, $w_{0, s}=0$,
and $w_{i, n+1}=0$ for all $i \in V$
are added after the existing $w_{i, j}=w_{j, i}$ for all unordered pairs $(i, j) \in E$.

Any possibly non-simple path, $p$,
starting from the designated base $s$, and visiting all colors at least once in $G$, corresponds
precisely to the path $0, p, n+1$ in $G'$, where nodes $0$, and $n+1$ are the source, and the
sink respectively. In the same way, a possibly non-simple path $p=0, p', n+1$ in $G'$, where
$p'$ is a possibly non-simple path starting at $s$, and with length at least one, corresponds
to $p'$ in $G$. As a result, the feasible solutions in $G$, and $G'$ will be in one-to-one correspondence,
as long as the ILP formulation of \emph{ACSP} can place a restriction on any feasible solution
in $G'$ to start from the source, and to terminate at the sink. Moreover, these corresponding
solutions have both the same cost.
It is hence obvious that a solution to an instance of \emph{ACSP} on $G$ as given above
is optimal if and only if the corresponding solution on the transformed instance employing $G'$
is also optimal. 

The ILP formulation for a given instance of \emph{ACSP} can now be stated with reference
to the transformation described above.
\begin{align}
&\text{minimize} \sum_{(i, j) : (i, j) \in E'} x_{i, j}*w_{i, j}
\label{ILP:objective}
\\
&\text{subject to} \nonumber
\\
& x_{0, s} = 1
\label{ILP:sourceToBase}
\\
& \sum_{\substack{(i, j) : (i, j) \in E' \\ \wedge~\kappa(j)=c}} x_{i, j} \ge 1 && ,\; \forall c \in C'
\label{ILP:visitEveryColor}
\\
& \sum_{\substack{j : (j, i) \in E' \\ \hphantom{(i, j) : (i, j) \in E'}}} x_{j, i}
= \sum_{j : (i, j) \in E'} x_{i, j} && ,\; \forall i \in V
\label{ILP:inEqualOut}
\\
& y_j \ge x_{i, j} && ,\; \forall (i, j) \in E'
\label{ILP:nodeVisitedIfEdgeEnters}
\\
&  \sum_{\substack{i : (i, j) \in E' \\ \hphantom{(i, j) : (i, j) \in E'}}} x_{i, j} \ge y_j && ,\; \forall j \in V' \setminus \{0\}
\label{ILP:edgeEntersIfNodeVisited}
\\
& \sum_{\substack{j : (j, i) \in E' \\ \hphantom{(i, j) : (i, j) \in E'}}} f_{j, i}
=y_i + \sum_{j : (i, j) \in E'} f_{i, j} && ,\; \forall i \in V
\label{ILP:flow}
\\
& x_{i, j} \le f_{i, j} \le (n+1)*x_{i, j} && ,\; \forall (i, j) \in E'
\label{ILP:boundFlow}
\\
& x_{i, j} \in \{0, 1\} && ,\; \forall (i, j) \in E'
\label{ILP:x}
\\
& y_i \in \{0, 1\} && ,\; \forall i \in V'\setminus \{0\}
\label{ILP:y}
\\
& f_{i, j} \in \{0, 1, ..., n+1\} && ,\; \forall (i, j) \in E'
\label{ILP:f}
\end{align}
The objective in this formulation is to minimize the sum of the weights
over all the directed edges that have been visited as shown in \eqref{ILP:objective}.
The binary variable $x_{i, j}$ is set to $1$
when the directed edge $(i, j)$ is visited, and to $0$ otherwise.
It should be noted that all the edges involving the source, and the sink,
introduced later in the transformation, with weight zero have no effect on the objective.
Constraint~\eqref{ILP:sourceToBase} ensures that the edge from the source
to the base is always a part of any feasible solution.
Therefore, any feasible path always starts from the source,
and then moves straight to the base.
Constraint~\eqref{ILP:visitEveryColor} demands for each
distinct color that the number of the visited edges directed at the
nodes with this same color is at least one.
As a result, every distinct color gets visited at least once.
As the constraint must also hold for color $0$, any feasible
path is guaranteed to terminate at the sink.
Constraint~\eqref{ILP:inEqualOut} is used to make sure that the
number of the visited edges that enter into any node $i$ in $G$ is equal to the
number of the visited edges that leave it. This obviously holds for all the nodes, but
the source, and the sink in $G'$. The main ingredient in enforcing the
shape of the solution to a possibly non-simple path is this constraint.
Constraints~\eqref{ILP:nodeVisitedIfEdgeEnters}, and \eqref{ILP:edgeEntersIfNodeVisited}
establish collectively the rules associated with the variables $y_j$
for all $j \in V' \setminus \{0\}$. The binary decision variable $y_j$ is set to $1$
if and only if node $j$ has been visited in a feasible solution.
Constraint~\eqref{ILP:nodeVisitedIfEdgeEnters} simply asserts
that visiting an edge $(i, j)$ is an implication of visiting node $j$
while Constraint~\eqref{ILP:edgeEntersIfNodeVisited}
predicates the converse.
Constraint~\eqref{ILP:flow}, along with \eqref{ILP:boundFlow}, is
used to eliminate any possible sub-tours, and to ensure connectedness
to the base. Constraint~\eqref{ILP:flow} employs non-negative integer valued
flow variables, denoted by $f_{i, j}$ for all edges $(i, j) \in E'$.
It enforces the total flow into a visited node to be equal to
one greater than the total flow out of that node.
In formulating this constraint, it is assumed that
the source supplies a limited amount of flow to distribute
to those nodes that are visited in any feasible solution. Hence,
each node visited consumes a unit flow.
Constraint~\eqref{ILP:boundFlow} is in charge of regulating
the flow values. A flow is associated with an edge if and only if
that edge is part of a solution. As the flow is conserved at
all the nodes in the original graph, the base node $s$ is no
exception. Coupled with the fact that each node visited consumes
a unit flow, the edge $(0,s)$ should carry
as many unit flows as there are nodes to visit. Excluding the source
leaves us with a maximum of $n+1$ nodes, and hence, the factor
in \eqref{ILP:boundFlow}. Finally, the constraints \eqref{ILP:x},
\eqref{ILP:y}, and \eqref{ILP:f} are the integrality constraints
for the decision variables $x_{i, j}$, $y_i$, and $f_{i, j}$ respectively.

\section{Heuristic solutions}\label{sec:heuristics}
In this section, we describe our heuristic solutions to the intractable
\emph{ACSP} problem. Section~\ref{sec:LP-relaxation} explains several
heuristic solutions based on LP-relaxation.
Simulated annealing, ant colony optimization, and genetic algorithm based
heuristic solutions to \emph{ACSP} are presented later in sections
\ref{sec:sim-ann}, \ref{sec:ant-col}, and \ref{sec:genetic} respectively.

\subsection{LP Relaxation}\label{sec:LP-relaxation}

The given ILP formulation, \eqref{ILP:objective} through \eqref{ILP:f},
is relaxed to an LP by replacing the integrality constraints
\eqref{ILP:x}, \eqref{ILP:y}, and \eqref{ILP:f} with
\begin{align}
& 0 \le x_{i, j} \le 1 && ,\; \forall (i, j) \in E'
\tag{\ref{ILP:x}$'$}
\label{ILP:x'}
\\
& 0 \le y_i \le 1 && ,\; \forall i \in V'\setminus \{0\}
\tag{\ref{ILP:y}$'$}
\label{ILP:y'}
\\
& 0 \le f_{i, j} \le n+1 && ,\; \forall (i, j) \in E'
\tag{\ref{ILP:f}$'$}
\label{ILP:f'}
\end{align}
respectively.
Now, the decision variables can take on real values.

We propose several heuristics based on rounding the solutions
to this LP relaxation. Having learned from Theorem~\ref{theorem:inapproximable}
that a constant-factor approximation does not exist for \emph{ACSP},
we explore strategies based on iterative rounding, rather than
typical one-shot rounding.

The first heuristic, called \emph{$\text{LP}_{\text{x}}\text{ACSP}$},
after obtaining the optimal solution to the LP relaxation, finds the
maximum value strictly less than one among all $x_{i, j} \in E'$.
The set of all indexes for which this maximum is attained
is denoted by $\mu$
(i.e., $\mu=\underset{((i,j) \in E') \wedge (0<x_{i, j}<1)}{\operatorname{arg\,max}} x_{i, j}$).
Next, the LP relaxation formulation at hand is augmented
with the additional constraints in the form of $x_{i, j}=1$ for all $(i, j) \in \mu$.
Finally, a subsequent call to LP for the extended formulation is issued.
Hence, the job, in this subsequent call, becomes finding the shortest possibly
non-simple path that fulfills not only the previous set of constraints but
also passes through every additional edge explicitly dictated by
the added constraints. This process is repeated until no fractional
values to process are left, and hence $\mu=\emptyset$.

The other two heuristics, called \emph{$\text{LP}_{\text{f}}\text{ACSP}$},
and \emph{$\text{LP}_{\text{f/x}}\text{ACSP}$} use exactly the same
strategy described above except for how $\mu$ is computed before a call
to LP. While \emph{$\text{LP}_{\text{f}}\text{ACSP}$}
relies on the flow variables
($\mu=\underset{((i,j) \in E') \wedge (0<x_{i, j}<1)}{\operatorname{arg\,max}} f_{i, j}$)
in deciding which additional $x_{i, j}$ values to round before the next iteration,
\emph{$\text{LP}_{\text{f/x}}\text{ACSP}$} bases its decision on the ratios
of $f_{i, j}/x_{i, j}$
($\mu=\underset{((i,j) \in E') \wedge (0<x_{i, j}<1)}{\operatorname{arg\,max}} \frac{f_{i, j}}{x_{i, j}}$).

\subsection{Simulated Annealing}\label{sec:sim-ann}

We develop another heuristic solution for \emph{ACSP},
based on Simulated Annealing (SA) \cite{Kirkpatrick1983}. This new
heuristic, called \emph{SA-ACSP}, can be described in three primary parts:
%

\textbf{(1) Choosing an initial random path:} The general outline of the algorithm
for \emph{SA-ACSP} is given in Figure~\ref{fig:sim-ann-algorithm}.
The algorithm starts with a random possibly non-simple path that visits every color
at least once. Such a path is constructed by randomly extending an existing
path, originating from the base, until it visits every color at least once.
This process is performed only once, in line~\ref{line:randomPath},
at the start of each iteration
in the while loop in lines~\ref{line:startWhile} through \ref{line:endWhile}.

\textbf{(2) Generating neighbors:} 
We generate a neighbor by removing the last node in the current state, and
then, adding to a random position in the path the closest node with the same
color as the removed node.

\textbf{(3) Selecting the best path:} 
Starting from an initial temperature, denoted by $T$ in the algorithm in
Figure~\ref{fig:sim-ann-algorithm}, the system is cooled down until a
frozen state is reached, where the temperature is close to zero.
Cooling down is done by decreasing the temperature slightly at each
iteration as seen in line~\ref{line:coolDown}. The symbol $R$ there
corresponds to the cooling rate.
The energy of each state is defined by the cost of the path selected at
that stage. As the temperature approaches to that of a frozen state,
\emph{SA-ACSP} keeps exploring the neighbors. At each iteration,
a neighbor solution is discovered in the search space, and chosen probabilistically
according to Metropolis Criterion \cite{Metropolis},
\begin{equation*}
p(\Delta E) = e^{-\Delta E / kT},
\end{equation*}
where $k$ is Boltzmann's constant, $T$ is the temperature, and $\Delta E$ is the
difference between the energies of the current, and the neighbor solutions.
If the total energy decreases, the new state is assumed right away. Otherwise,
the system chooses to go to the new state according to the probability
that is produced by Metropolis criterion. 

\begin{figure}[h!]
\begin{algorithmic}[1]
\begin{small}
\Procedure {Sa-Acsp}{}
	\State $\Delta E \leftarrow 0;$
	\State $\textit{bestCost} \leftarrow \infty;$
	\State $T \leftarrow \textit{SetInitialTemperature}();$
	\State $\textit{iterationCount} \leftarrow \textit{noOfNodes}*\textit{noOfColors}/5;$
	\While { $T \ge \textit{freezingTemp}$} \label{line:startWhile}
		\State $\textit{localBestPath} \leftarrow \textit{findARandomPath}();$\label{line:randomPath}
		\State $\textit{localBestCost} \leftarrow \textit{costOf}(\textit{localBestPath});$
		\For {$i \gets 0 \textbf{ to } \textit{iterationCount}$}
			\State $\textit{nextPath} \leftarrow \textit{findANeighbourPath}();$
			\State $\textit{nextCost} \leftarrow \textit{costOf}(\textit{nextPath});$
			\State $\Delta E \leftarrow \textit{nextCost} - \textit{localBestCost};$
			\If {$\Delta E < 0$}
				\State $\textit{localBestPath} \leftarrow \textit{nextPath};$
				\State $\textit{localBestCost} \leftarrow \textit{nextCost};$
			\Else
				\State $r \leftarrow \textit{Random}(0,1);$
				\If{$r < e^{-\Delta E/T}$}
					\State $\textit{localBestPath} \leftarrow \textit{nextPath};$
					\State $\textit{localBestCost} \leftarrow \textit{nextCost};$
				\EndIf
			\EndIf
		\EndFor
		\If{$\textit{localBestCost} < \textit{bestCost}$}
			\State $\textit{bestPath} \leftarrow \textit{localBestPath};$
			\State $\textit{bestCost} \leftarrow \textit{localBestCost};$
		\EndIf
		\State $T \leftarrow T * R;$\label{line:coolDown}
	\EndWhile \label{line:endWhile}
	\State \textbf{return} \textit{bestCost};
\EndProcedure
\end{small}
\end{algorithmic}
\caption{\emph{SA-ACSP}: Heuristic based on simulated annealing.}
\label{fig:sim-ann-algorithm}
\end{figure}

\subsection{Ant Colony Optimization}\label{sec:ant-col}
In this section, we present the details of how Ant Colony Optimization (ACO)
\cite{Dorigo1997,Dorigo1996} is applied to obtain \emph{ACO-ACSP},
another heuristic solution, to \emph{ACSP} problem.

In \emph{ACSP}, each color needs to be visited at least once.
Therefore, the ant colony optimization algorithm is implemented
to visit multiple food types, where each food type corresponds
to a distinct color. In other words, when an ant leaves the nest,
its search is not over until it finds a path that passes over every
food type. The base node is chosen as the nest of all the ants,
and at each iteration, the entire colony of ants is released from
this nest to the graph.

The random movements of the ants, while visiting a node, are governed
by an edge selection procedure. Depending on whether there is any
trace of pheromone on an incident edge, the ants compute two types
of edge selection probabilities. In the first one, when
there is no pheromone on any incident edge, the ants make the selection
based on the edge costs (or distances) using the following formula,
\begin{equation*}
prob_{i,j}=\frac{c_0 - w_{i, j}}{\sum_{k:(i, k) \in E}(c_0 - w_{i, k})},
\end{equation*}
where $prob_{i, j}$ is the selection probability of edge $(i, j) \in E$,
and $c_0$ is a constant.
The second case occurs when the pheromone level on at least one incident
edge is not zero. In this case, the edge selection probability calculation
is performed, based on the pheromone levels, as:
\begin{equation*}
prob_{i, j}=\frac{(D_{i, j})^{\beta}*\sum_{k \in C} (Ph_{i, j}(k))^{\alpha}}
{\sum_{q:(i, q) \in E} \big[ (D_{i, q})^{\beta}*\sum_{k \in C} (Ph_{i, q}(k))^{\alpha}\big]},
\end{equation*}
where $Ph_{i, j}(k)$ is the pheromone level on edge $(i, j) \in E$
associated with color $k \in C$, $\alpha$, and $\beta$ are user defined parameters
with $0 \le \alpha \le \beta \le 1$, and the desirability $D_{i, j}$ of edge
$(i, j) \in E$ is defined to be inversely proportional to the edge's cost
as $D_{i, j}=1/w_{i, j}$.

Pheromone level updates are carried out in two different
ways, namely, the local, and the global updates.
The local updates are applied to all the edges selected
because each ant secretes pheromone as it moves on the edges.
Moreover, the pheromones are not stored only on the edges.
Ants also have some pheromones within themselves, and
their levels drop while they are secreted by the ants during
their traversal. Therefore, the local updates are performed
on the edges as well as the ants selecting them. While the
local update to the pheromone level corresponding to color
$k \in C$, after the selection of edge $(i, j) \in E$ by ant $t$,
is performed by
\begin{equation*}
Ph_{i, j}(k) = (1 - \delta) *  Ph_{i, j}(k) + \delta * Ph_{t}(k),
\end{equation*}
the level of the pheromone associated with color $k$ stored
on ant $t$ becomes the subject of the local update
\begin{equation*}
Ph_{t}(k) = Ph_{t}(k) - \delta * Ph_{t}(k),
\end{equation*}
where $\delta$ is a user defined evaporation parameter
such that $0 \le \delta \le 1$. It should be noted here that
the same notation has been employed to keep track of the
pheromone levels on both the edges, and the ants. However,
the levels of the pheromones stored on ants are tracked
with a single subscripted index as opposed to two for
the edges. 

The second type of the update to pheromone levels comes under
the title of the global update. The global pheromone update is also
known as off-line pheromone update. It is applied, at the end of each
iteration, only to the edges that are on the best path found so far.
The pheromone level for each color $k \in C$ on each such edge
is updated using the following formula
\begin{equation*}
Ph_{i, j}(k) = (1 - \delta) * Ph_{i, j}(k) + \delta * \frac{1}{cost(bestPath)}
\end{equation*}
where $cost(bestPath)$ denotes the total cost of traversing the
edges associated with the best, possibly non-simple, path found so far.

The pseudo-code of \emph{ACO-ACSP} heuristic algorithm, reflecting
the general anatomy of ant colony optimization as applied to
\emph{ACSP} is presented in Figure~\ref{fig:ant-colony-pseudo-code}.

\begin{figure}[h!]
\begin{algorithmic}[1]
\begin{small}
\Procedure{selectEdge(Ant ant)}{}
	\If{$\textit{ant.isDone}$ OR $\textit{ant.isDiscarded}$}
		\State $\textit{ant.pheromoneUpdate} \leftarrow \textit{false};$
	
	\Else
		\State $\textit{incidentEdges} \leftarrow \textit{findAvailableEdges}(\textit{ant});$
		\If{$\textit{sizeOf}(\textit{incidentEdges})=0$} 
			\State $\textit{ant.isDiscarded} \leftarrow \textit{true};$
		\Else
			\State $\textit{ant.pheromoneUpdate} \leftarrow \textit{true};$
			\State $\textit{sum} \leftarrow \textit{calcProbUsingPheromones}();$
			\If{$\textit{sum}=0$}
				\State \textit{calcProbUsingDistances}();
			\EndIf
			\State \textit{updateAntToPickEdge}();
		\EndIf
		
	\EndIf
\EndProcedure
\end{small}
\end{algorithmic}
\begin{algorithmic}[1]
\begin{small}
\Procedure {Aco-Acsp()}{}   
	\State \textit{bestAnt} $\gets \emptyset;$
	\For{$j \gets 0 \textbf{ to } \textit{iterationCount}$}
		\State \textit{colony} = \textit{createAnts}(\textit{colonySize});
		\State $\textit{antsDoneTour} \gets 0;$
		\While{$\textit{antsDoneTour} < \textit{colonySize}$}
			\ForAll{\textit{ant} \textbf{in} \textit{colony}}
				\State \textit{selectEdge}(\textit{ant});
				\State \textit{updateLastSelectedEdge}();
				\If {$\textit{ant.hasAllColors}$}
					\State $\textit{ant.isDone} \leftarrow \textit{true};$
					\State ++\textit{antsDoneTour};
				\EndIf
			\EndFor
		\EndWhile
		\State $\textit{tempAnt} \leftarrow \textit{findBestAnt}();$
		\State \textit{updatePheromoneOnBestPath}(\textit{tempAnt.path});
		\If{$\textit{tempAnt.cost} \le \textit{bestAnt.cost}$}
			\State $\textit{bestAnt} \leftarrow \textit{tempAnt};$
		\EndIf
	\EndFor
	\State \textbf{return} \textit{bestAnt.cost};
\EndProcedure
\end{small}
\end{algorithmic}
\caption{\emph{ACO-ACSP}: Heuristic based on ant colony optimization.}
\label{fig:ant-colony-pseudo-code}
\end{figure}

\subsection{Genetic Algorithm}\label{sec:genetic}	

The Genetic Algorithm (GA) \cite{Holland} has five main steps: initialization,
fitness, selection, crossover, and mutation. In this section, we develop
\emph{GA-ACSP} which is another heuristic solution to \emph{ACSP} based
on GA. The algorithm is presented in Figure \ref{fig:genetic}.

During the initialization step,
a pool of chromosomes, called population, is generated. We encode the
chromosomes in such a way that each chromosome is represented by an
ordered list of vertices, corresponding to a solution to a given \emph{ACSP}
instance. As the path is not necessarily a simple path, each vertex may appear
multiple times on a chromosome. Initially, the population is filled with a certain
number of randomly created, possibly non-simple, paths, each of which visits
all the distinct colors at least once.

\begin{figure}[h!]
\begin{algorithmic}[1]
\begin{small}
\Procedure {Ga-Acsp}{}
	\State $\textit{population} \gets \emptyset;$
	\For {$i \gets 0 \textbf{ to } \textit{populationSize}$}
		\State $\textit{chrm} = \textit{createRandomChromosome}();$
		\State $\textit{ensureConnectivity}(\textit{chrm});$\label{GA:ensure1}
		\State $\textit{population.add}(\textit{chrm});$
	\EndFor
	\For {$i \gets 0 \textbf{ to } \textit{iterationCount}$}
		\State $\textit{candidates} \gets \textit{rouletteWheelSelection}(\textit{population});$
		\State $\textit{children} \gets \textit{crossOver}(\textit{candidates});$
		\ForAll {\textit{child} \textbf{in} \textit{children}}
			\State $r \gets \textit{random}(0,1);$
			\If{$r < \textit{mutationProbability}$}
				\State $\textit{mutate}(\textit{child});$
			\EndIf
			\State $\textit{completeMissingColors}(\textit{child});$
			\State $\textit{ensureConnectivity}(\textit{child});$\label{GA:ensure2}
			\State $\textit{population.add}(\textit{child});$
		\EndFor
		\State $\textit{w2c} \gets \textit{find2chromosomesWithLowestFitness}();$
		\State $\textit{population.remove}(\textit{w2c});$
	\EndFor
	\State \textbf{return} \textit{costOfBestChromosome};
\EndProcedure
\end{small}
\end{algorithmic}
\caption{\emph{GA-ACSP}: Heuristic based on genetic algorithm.}
\label{fig:genetic}
\end{figure}

In the next step, the fitness values are calculated for each chromosome
in the population. The fitness value of a chromosome, in \emph{GA-ACSP},
is simply taken as the total distance traveled down the corresponding possibly
non-simple path.

In the selection step, two candidate chromosomes are selected from the
population for crossover. The selection of the candidates are performed
using the roulette wheel selection algorithm \cite{Goldberg}, in which
the chromosomes with higher fitness values have higher chances to be selected. 

In performing a crossover, two random positions $p_1$, and $p_2$ with a common
vertex are initially figured out in the first, and the second candidate chromosomes
respectively. The portions beyond $p_1$, and $p_2$ are then swapped between the
candidates to produce two new children. In case the candidates do not have a common
vertex, two more candidates for crossover are selected until a vertex common
to both can be found. In the end, two new, possibly non simple, paths are generated.
It should be noted, however, that these new paths are not guaranteed to visit all the
colors. Therefore, as soon as the crossover, and the mutation steps are over, we examine
the paths to find a list of the missing colors on each, and then, modify the paths accordingly
so that when the process is over, each of the two new chromosomes represents also a
non-simple path that visits each color at least once. In modifying the paths, we follow
a greedy policy, and append to the tail of the chromosome, at each iteration, the shortest
path from the tail to the closest vertex with a color not visited yet. This helps
keeping the path lengths as short as possible.

The last step in \emph{GA-ACSP} is the mutation. 
It is carried out by simply replacing two random vertices in the chromosome,
and rearranging the path to ensure that it remains connected. In order to
connect non-neighbor vertices in the chromosome, the shortest path between
those vertices is inserted into the chromosome.

After the crossover, and the mutation steps, the child chromosomes
need to be verified to correctly represent a possibly non-simple path.
After any possible problems regarding missing colors, and disconnectivity
are dealt with, as described above, there is still some more work to do.
First, the redundant segment at the tail part of the chromosome should be
cropped if all the colors have already been seen before the beginning
of that segment. Lastly, the property that no edge can be visited
more than once in any direction in an optimal solution can be violated
as a result of the crossover, and the mutation operations.
In such a case, the property should be restored, as highlighted
in the proof to Proposition~\ref{proposition:maxOneVisit}.

Once the two newly formed child chromosomes are added to the population,
the two chromosomes that have the worst fitness values are removed from
the population.

\section{Experiments}\label{sec:experiments}

In this section, we present the results of our experiments for the proposed heuristic
algorithms. We refer to
\emph{$\text{LP}_{\text{x}}\text{ACSP}$},
\emph{$\text{LP}_{\text{f}}\text{ACSP}$}, and
\emph{$\text{LP}_{\text{f/x}}\text{ACSP}$}
under the heading of LP relaxation based algorithms whereas
\emph{SA-ACSP}, \emph{ACO-ACSP}, and \emph{GA-ACSP}
are treated under the category of metaheuristic algorithms.
We implemented the metaheuristic algorithms in C++, and used
CPLEX for ILP, and LP relaxation based heuristics.
All tests are performed on computers that have AMD
Phenom(tm) II X4 810 2.67 GHz CPU, and 2 Gb 400 MHz DDR2 RAM running
on the 32-bit operating system Ubuntu 10.04. We conducted the experiments on
randomly generated graphs with varying number of nodes, and colors
as listed in Table \ref{table:nonlin}, with an average node degree of $6$,
uniform color distribution, and an average edge weight of $10$. The simulations are
conducted $10$ times for each graph type, and only the average, and the minimum
cost values are reported.

\begin{table}[h!]
\centering
\begin{tabular}{c c c}
\hline\hline
Graph Name & Number of Nodes & Number of Colors \\ [0.5ex]
\hline
n50-c10 & 50 & 10 \\
n50-c20 & 50 & 20 \\
n50-c25 & 50 & 25 \\
n100-c25 & 100 & 25 \\
n100-c40 & 100 & 40 \\
n100-c50 & 100 & 50 \\
n200-c50 & 200 & 50 \\
n200-c75 & 200 & 75 \\[1ex]

\hline
\end{tabular}
\caption{The number of nodes, and colors for the randomly generated graph types
used in the experiments.}
\label{table:nonlin}
\end{table}

In the sections to follow, the experimental results are reported, first, separately
for each metaheuristic algorithm. In each of these sections, we conduct various
experiments for parameter tuning, namely, to find the optimal values of individual
parameters specific to a metaheuristic. Finally, in Section \ref{sec:comparison},
an overall comparison is presented to assess the relative performance of all the
heuristics proposed, using the fine-tuned parameters.

\subsection{\emph{SA-ACSP}: Parameter Tuning for SA}
In SA, the two parameters essential to performance are the best cooling
rate with respect to time and cost, and the best temperature to be used
with the best cooling rate. All the other parameters for \emph{SA-ACSP}
are kept unchanged during these tests.

The cooling rate is used to determine the amount of decrease in the temperature
value at each iteration. 
We tested \emph{SA-ACSP} with various cooling rate values as
presented in Figures \ref{Coolingrate-Cost-Min}, \ref{Coolingrate-Cost-Avg}, and
\ref{Coolingrate-Time-Avg}. The temperature value for these tests is set at
$100$. Looking at Figures \ref{Coolingrate-Cost-Min}, and
\ref{Coolingrate-Cost-Avg}, we can observe that the cost slightly decreases with
the increasing values of cooling rate parameter. In Figure
\ref{Coolingrate-Time-Avg}, however, we also observe that the time increases dramatically
for the cooling rate values larger than $0.999$. Therefore, based on the results of
these experiments, we selected the best cooling rate parameter to be $0.999$.

\begin{figure} [h!]
\centering
\includegraphics[width=\columnwidth] {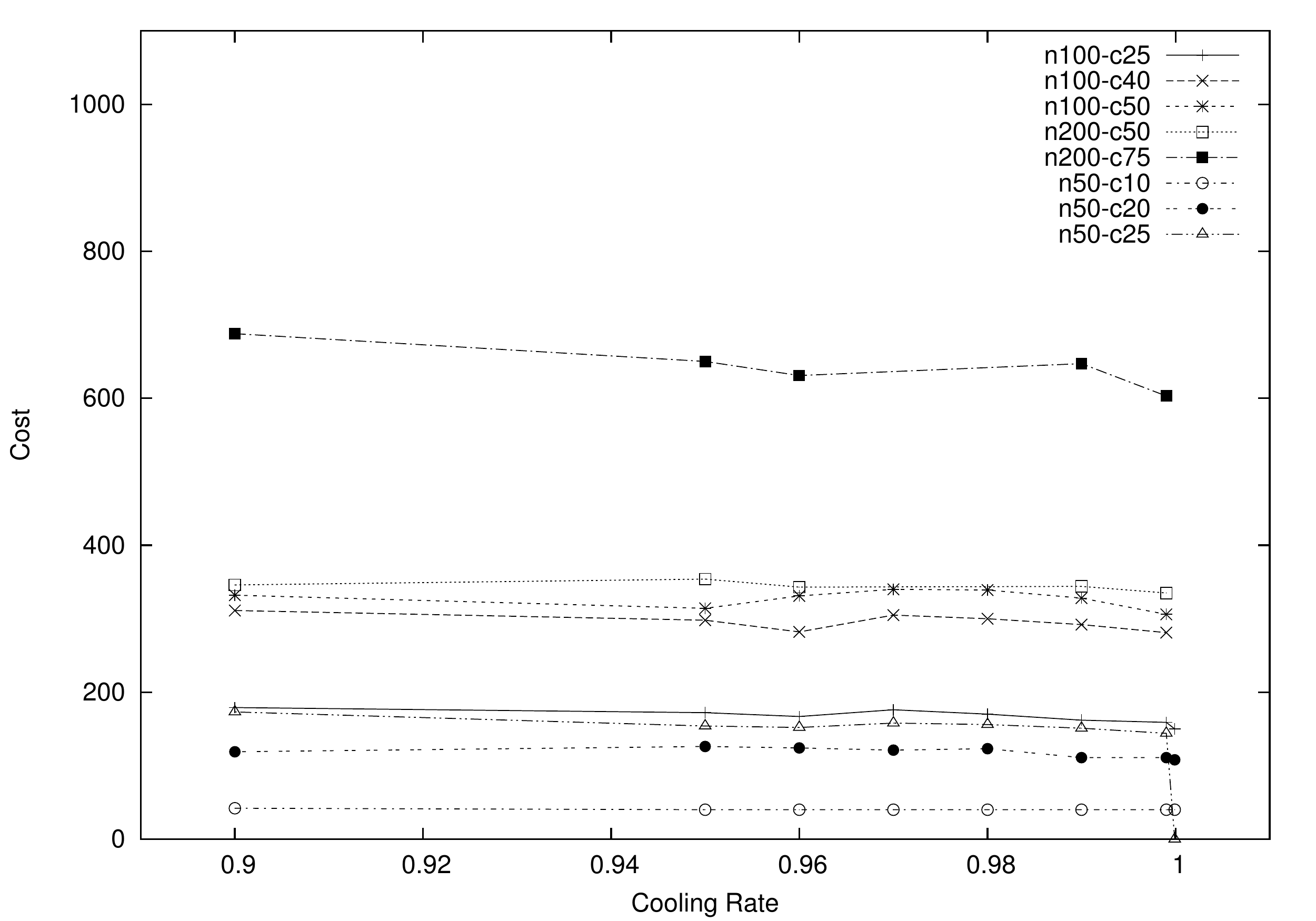}
\caption{Minimum cost for various cooling rate values in \emph{SA-ACSP}.}
\vspace{-2ex}
\label{Coolingrate-Cost-Min}
\end{figure}

\begin{figure} [h!]
\centering
\includegraphics[width=\columnwidth] {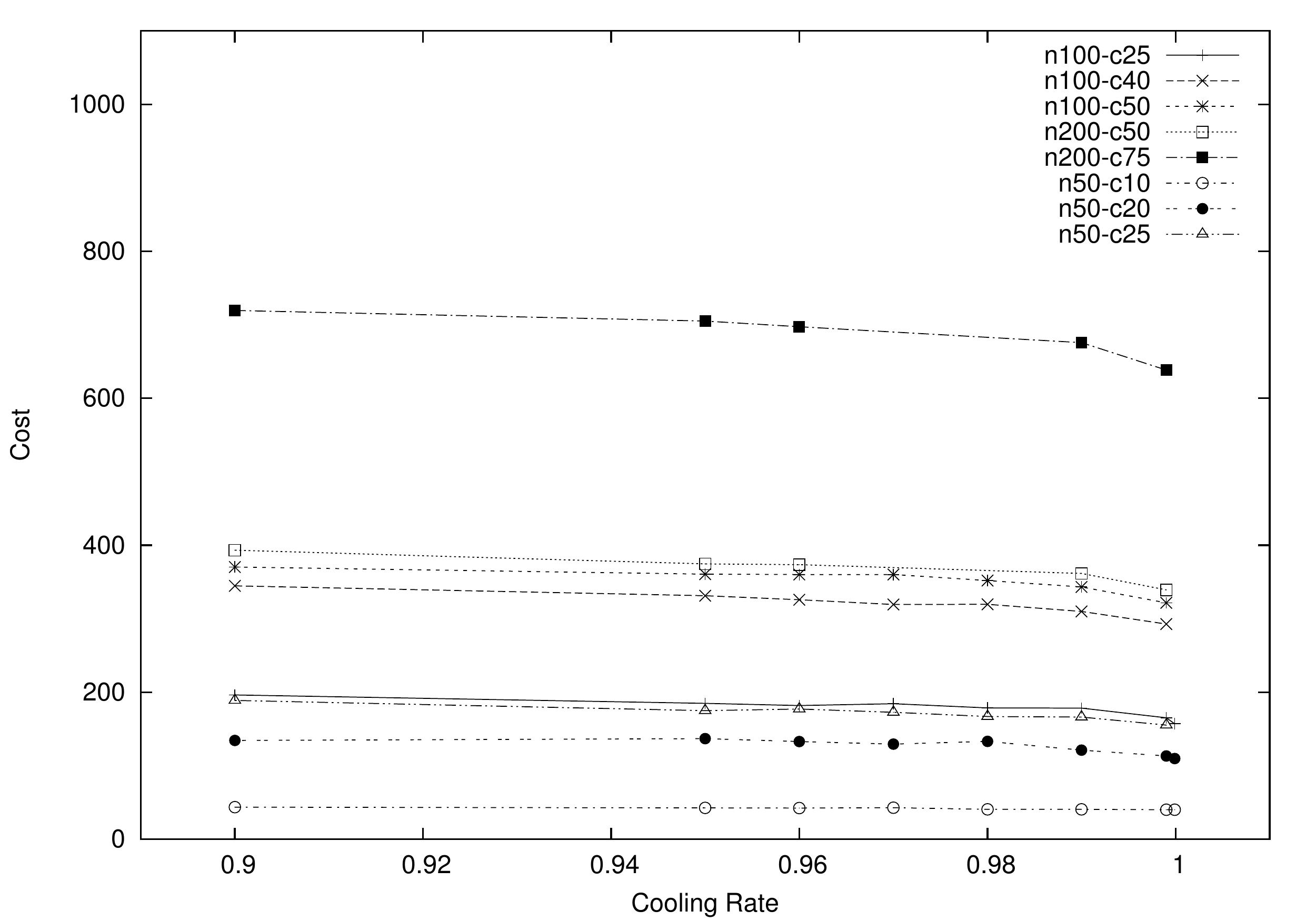}
\caption{Average cost for various cooling rate values in \emph{SA-ACSP}.}
\label{Coolingrate-Cost-Avg}
\end{figure}

\begin{figure} [h!]
\centering
\includegraphics[width=\columnwidth] {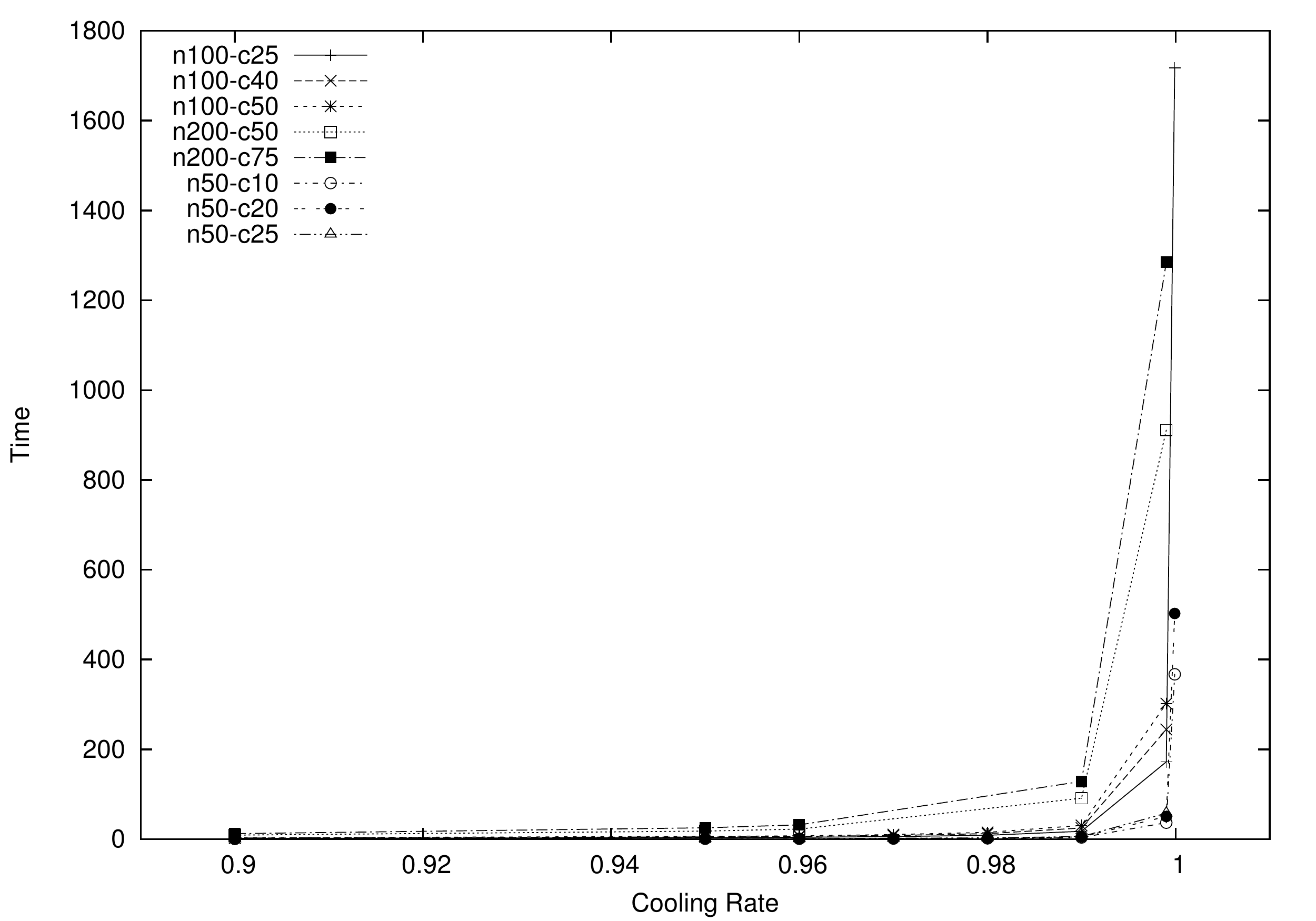}
\caption{CPU time for calculating the average cost values for various cooling rate
values in \emph{SA-ACSP}.}
\label{Coolingrate-Time-Avg}
\end{figure}

The temperature value in \emph{SA-ACSP} controls the probability of
choosing worse paths in order to not get stuck at a local minimum.
We conducted simulations with various temperature values
on all graph types, and present the cost, and the CPU times in Figures
\ref{Temperature-Cost-Min}, \ref{Temperature-Cost-Avg}, and
\ref{Temperature-Time-Avg}. Based on the results in these figures, we selected
$1000$ as the best temperature value.

\begin{figure} [h]
\centering
\includegraphics[width=\columnwidth] {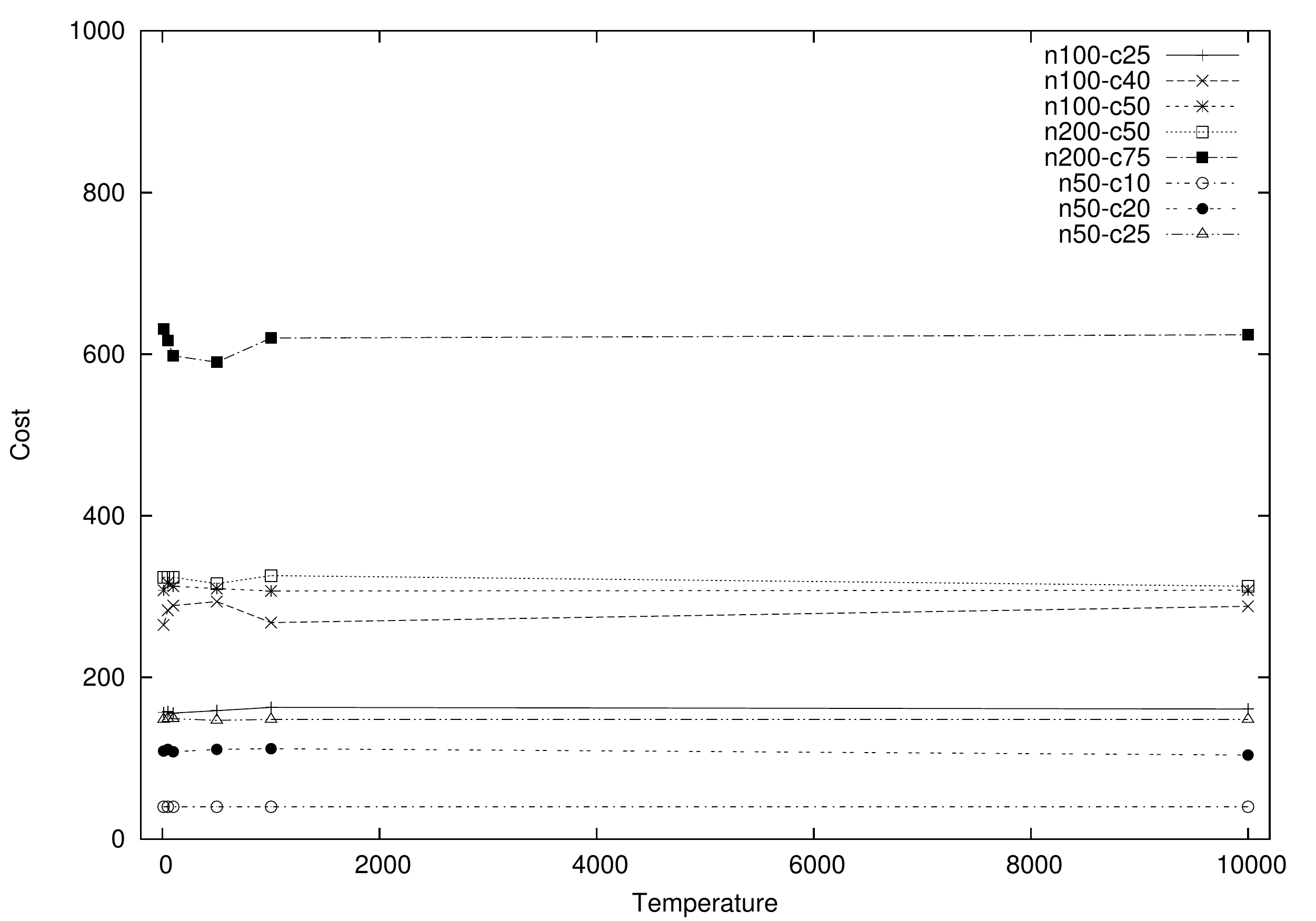}
\caption{Minimum cost for various temperature values in \emph{SA-ACSP}.}
\label{Temperature-Cost-Min}
\end{figure}

\begin{figure} [h]
\centering
\includegraphics[width=\columnwidth] {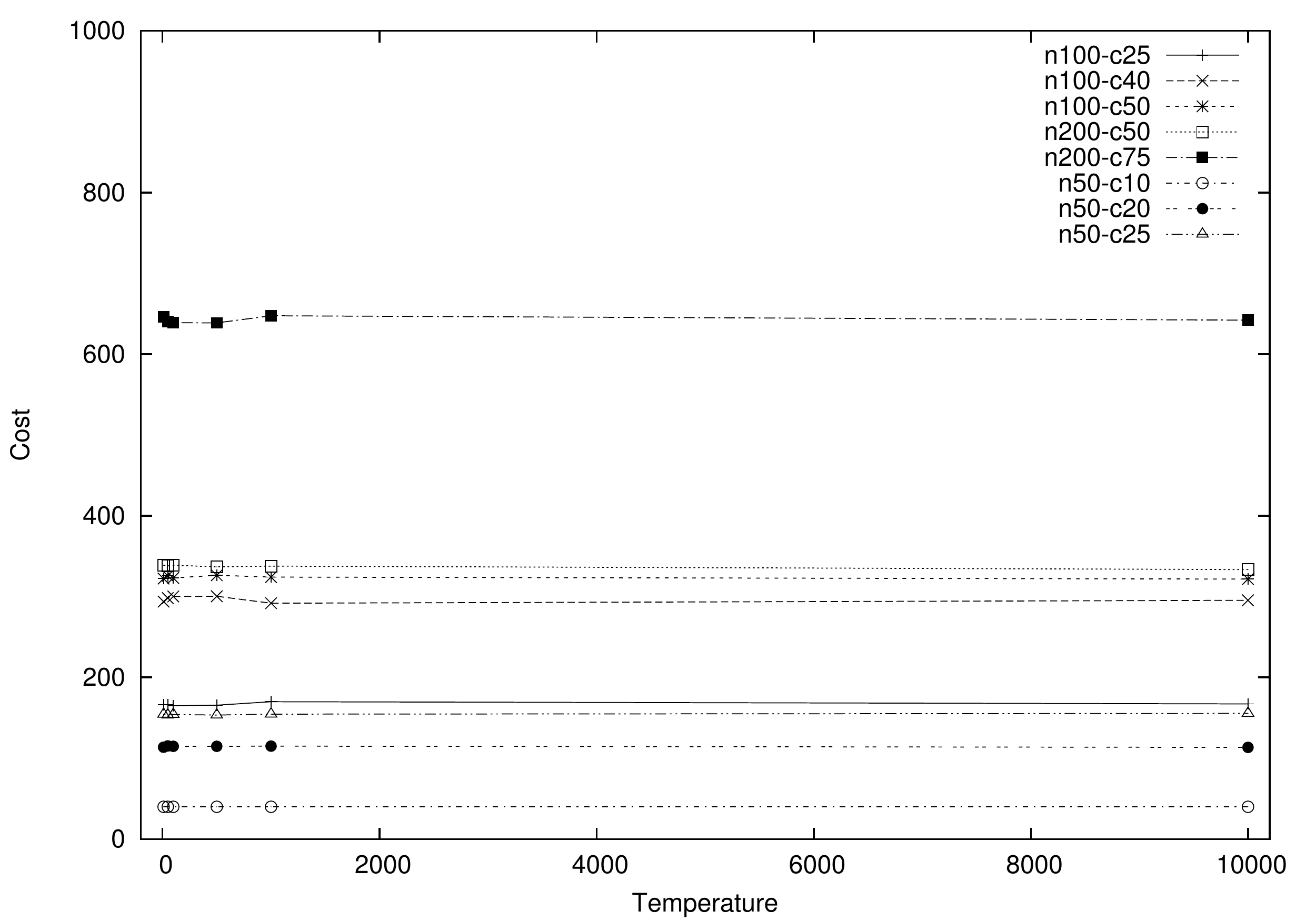}
\caption{Average cost for various temperature values in \emph{SA-ACSP}.}
\label{Temperature-Cost-Avg}
\end{figure}

\begin{figure} [h]
\centering
\includegraphics[width=\columnwidth] {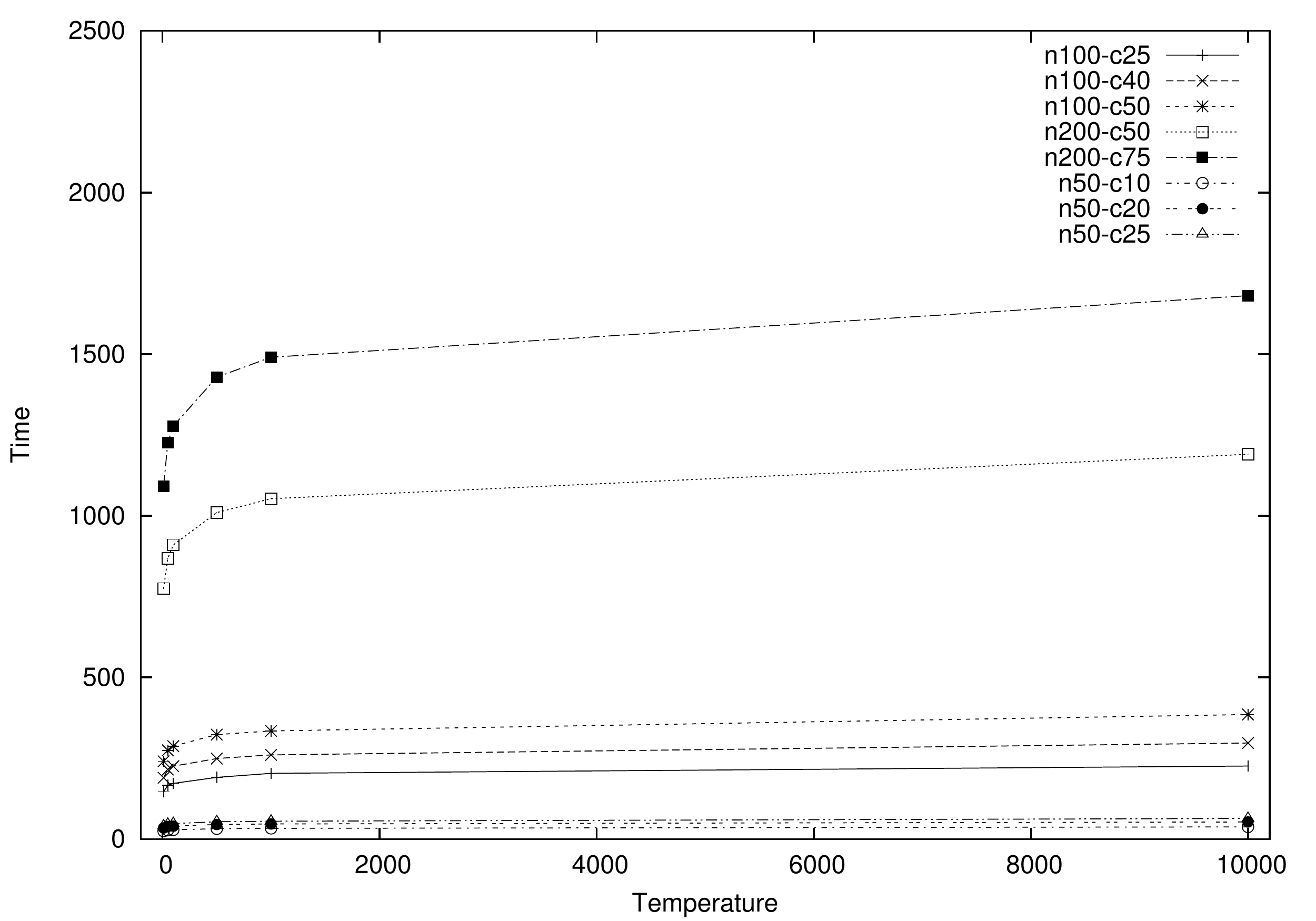}
\caption{CPU time for calculating average cost for various temperature values in \emph{SA-ACSP}.}
\label{Temperature-Time-Avg}
\end{figure}

\subsection{\emph{ACO-ACSP}: Parameter Tuning for ACO}

The behavior of \emph{ACO-ACSP} depends on four separate
parameters. In order to find the optimal value for each parameter, we conducted
a series of experiments for each individual parameter. In each experiment, all
the other parameters are kept constant, and only the specified parameters are
tested for various values, and the cost, and the runtime values observed
are recorded. The experiments conducted can be categorized into alpha-beta
tests, colony size tests, and probability tests.

Alpha, and beta values are the two parameters used for edge selection in
\emph{ACO-ACSP}. The parameter alpha denotes the importance of the
pheromone levels on the edges while calculating probabilities. Beta,
on the other hand, represents the importance of the edge weights.
The experiments are designed to decide on the combination of alpha, and beta
values that gives us the best result in terms of the cost, and the CPU time.
The results of the experiments are presented in Figures~\ref{Alphabeta-Cost-Min},
\ref{Alphabeta-Cost-Avg}, and \ref{Alphabeta-time-Avg}. These figures report the
results with respect to the CPU time, the average cost, and the minimum cost, and
based on the results, we selected the alpha value as $0.4$, and beta value as $0.5$.

\begin{figure} [h!]
\centering
\includegraphics[width=\columnwidth] {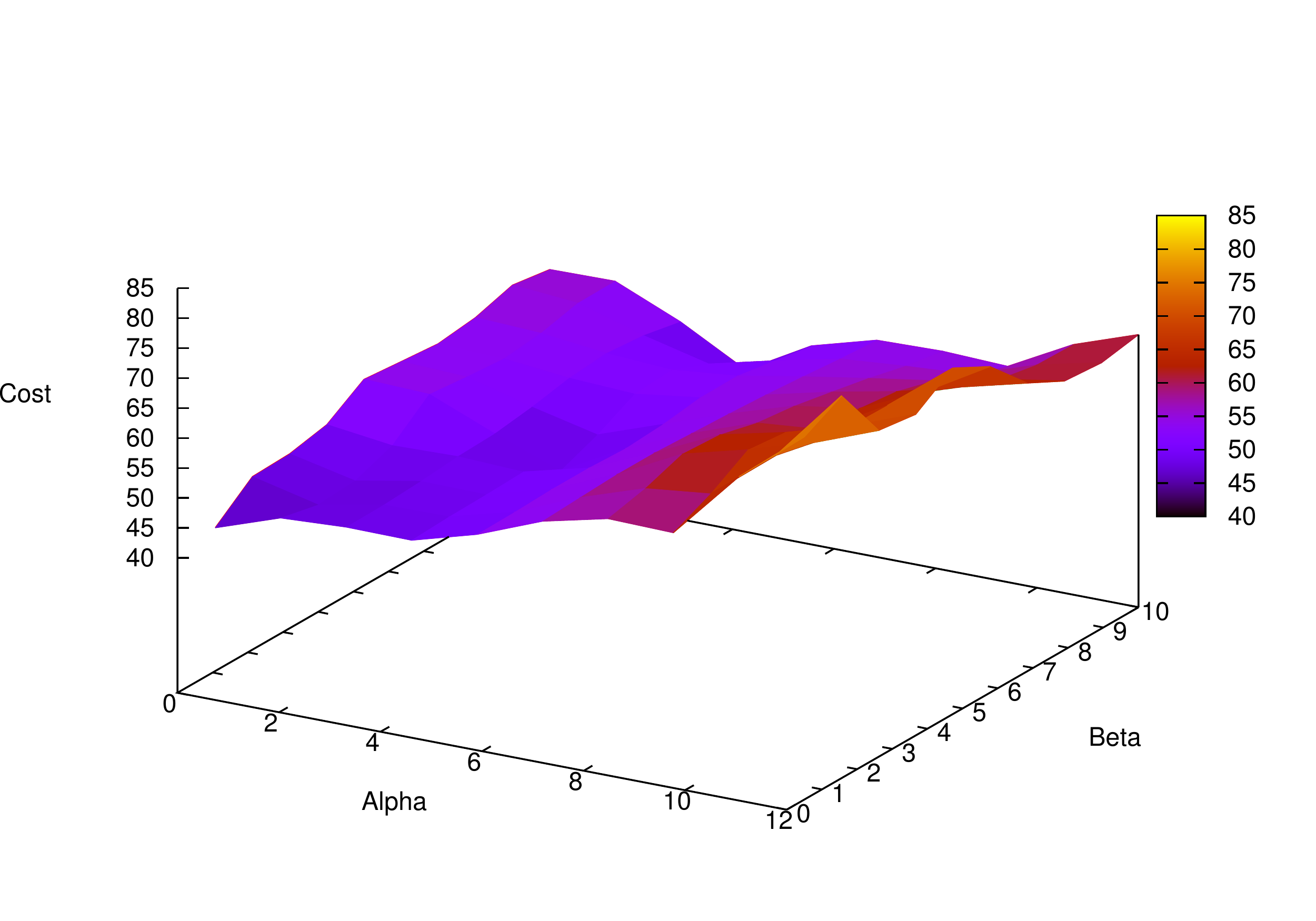}
\caption{Minimum cost for various combinations of alpha, and beta values in \emph{ACO-ACSP}.}
\label{Alphabeta-Cost-Min}
\end{figure}

\begin{figure} [h!]
\centering
\includegraphics[width=\columnwidth] {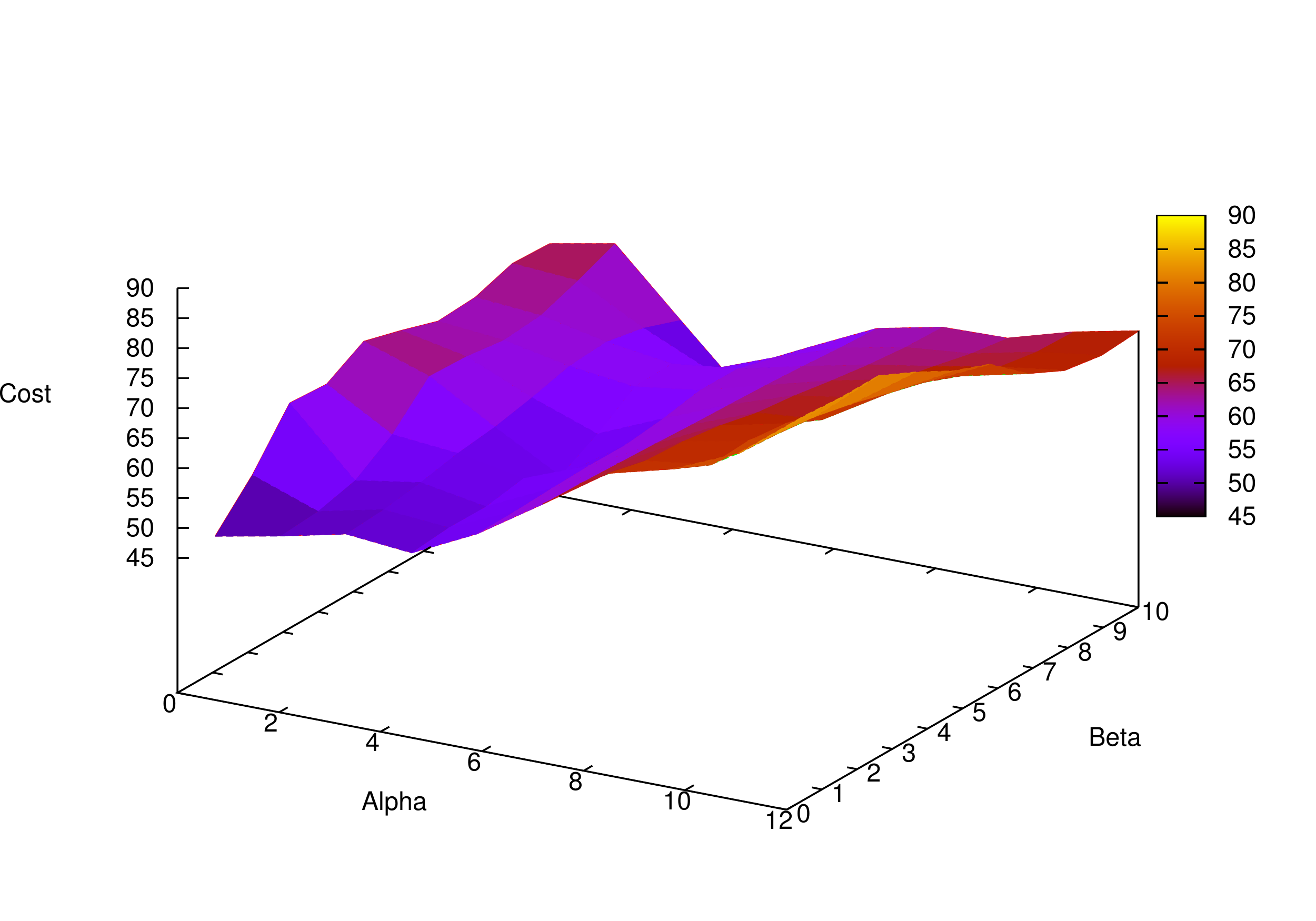}
\caption{Average cost for various combinations of alpha, and beta values in \emph{ACO-ACSP}.}
\label{Alphabeta-Cost-Avg}
\end{figure}

\begin{figure} [h!]
\centering
\includegraphics[width=\columnwidth] {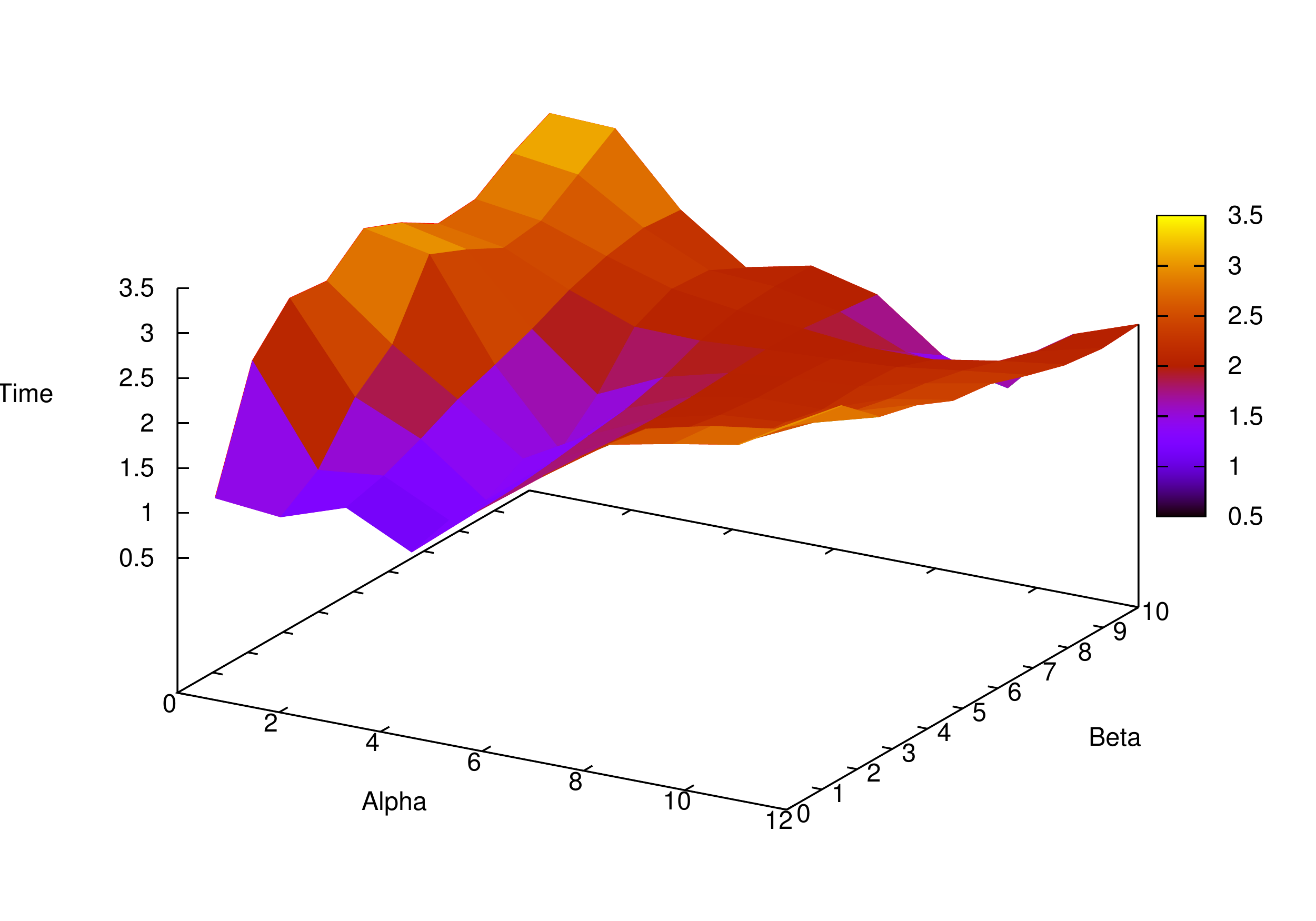}
\caption{CPU time for calculating the average cost for various combinations of alpha, and beta
values in \emph{ACO-ACSP}.}
\label{Alphabeta-time-Avg}
\end{figure}

In \emph{ACO-ACSP}, the colony size represents the number of active ants
deployed at each iteration of the algorithm. Having a larger colony increases the
chances for finding solutions closer to the optimal, however, at the cost of increasing
the overall runtime. Therefore, we test \emph{ACO-ACSP} for various colony sizes to
decide on the optimal colony size that can achieve a minimal cost solution in an acceptable
time period. The results of the experiments are presented in Figures~\ref{Colonysize-Cost-Min},
\ref{Colonysize-Cost-Avg}, and \ref{Colonysize-Time-Avg}.
As we can clearly see from the figures, the runtime increases linearly in the size of the colony,
and the cost decreases only slightly for colony sizes larger than $200$. Based on these results,
we selected the colony size as $200$ in the rest of the experiments.

\begin{figure} [h!]
\centering
\includegraphics[width=\columnwidth] {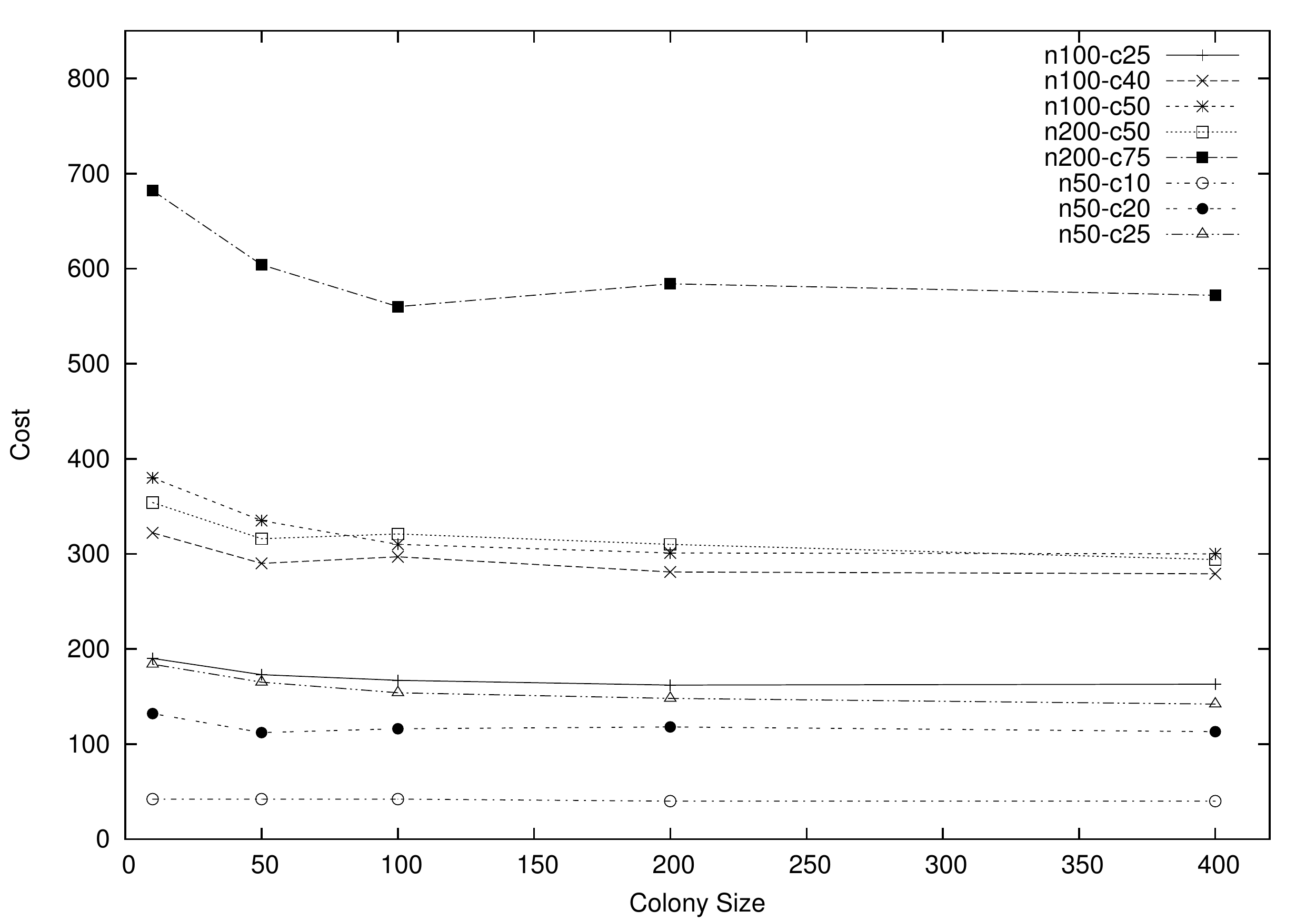}
\caption{Minimum cost for various colony size values in \emph{ACO-ACSP}.}
\label{Colonysize-Cost-Min}
\end{figure}

\begin{figure} [h!]
\centering
\includegraphics[width=\columnwidth] {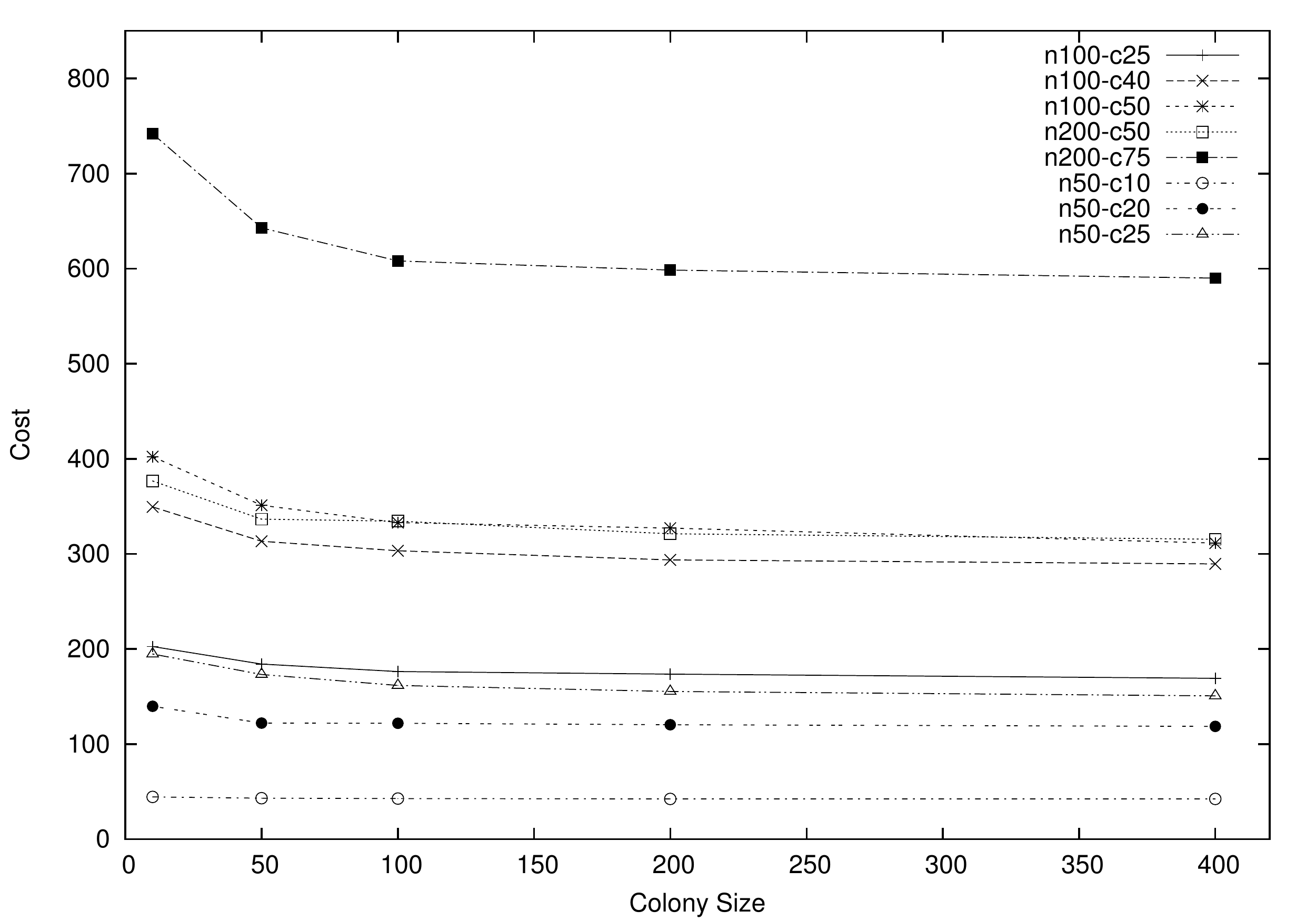}
\caption{Average cost for various colony size values in \emph{ACO-ACSP}.}
\label{Colonysize-Cost-Avg}
\end{figure}

\begin{figure} [h!]
\centering
\includegraphics[width=\columnwidth] {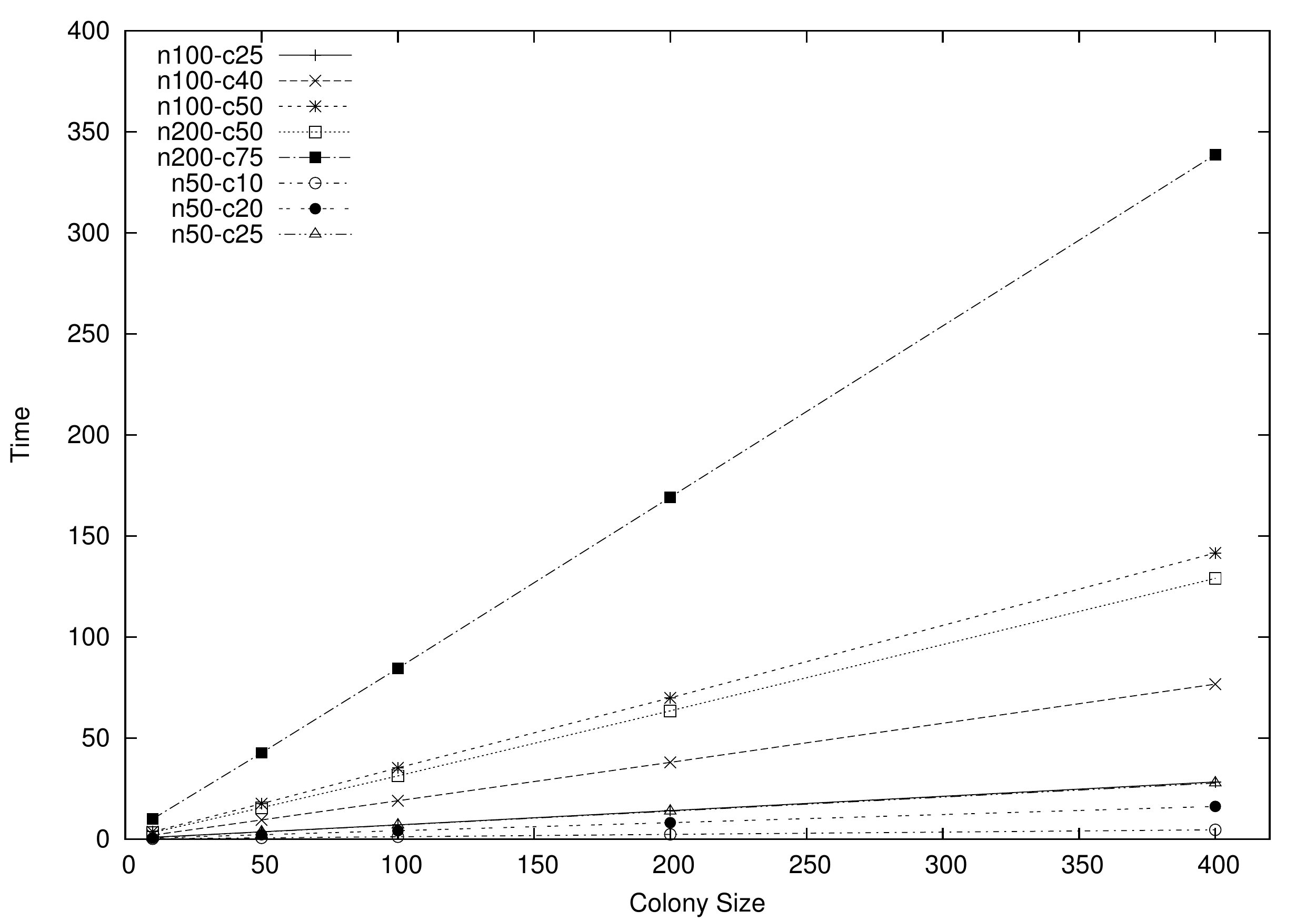}
\caption{CPU time for calculating the average cost for various colony sizes in \emph{ACO-ACSP}.}
\label{Colonysize-Time-Avg}
\end{figure}

The edge selection probabilities in \emph{ACO-ACSP} are calculated based on
either the level of pheromones on the edge or the edge weight itself.
The results of the experiments conducted to find the optimal probability value
is presented in Figures~\ref{Probability-Cost-Min}, \ref{Probability-Cost-Avg},
and \ref{Probability-Time-Avg}. A close inspection of these figures reveal that
both the cost, and the runtime increase for probability values larger than $0.95$.
Based on this observation, therefore, we selected the probability value as $0.9$
to be used throughout the rest of the experiments.

\begin{figure} [h!]
\centering
\includegraphics[width=\columnwidth] {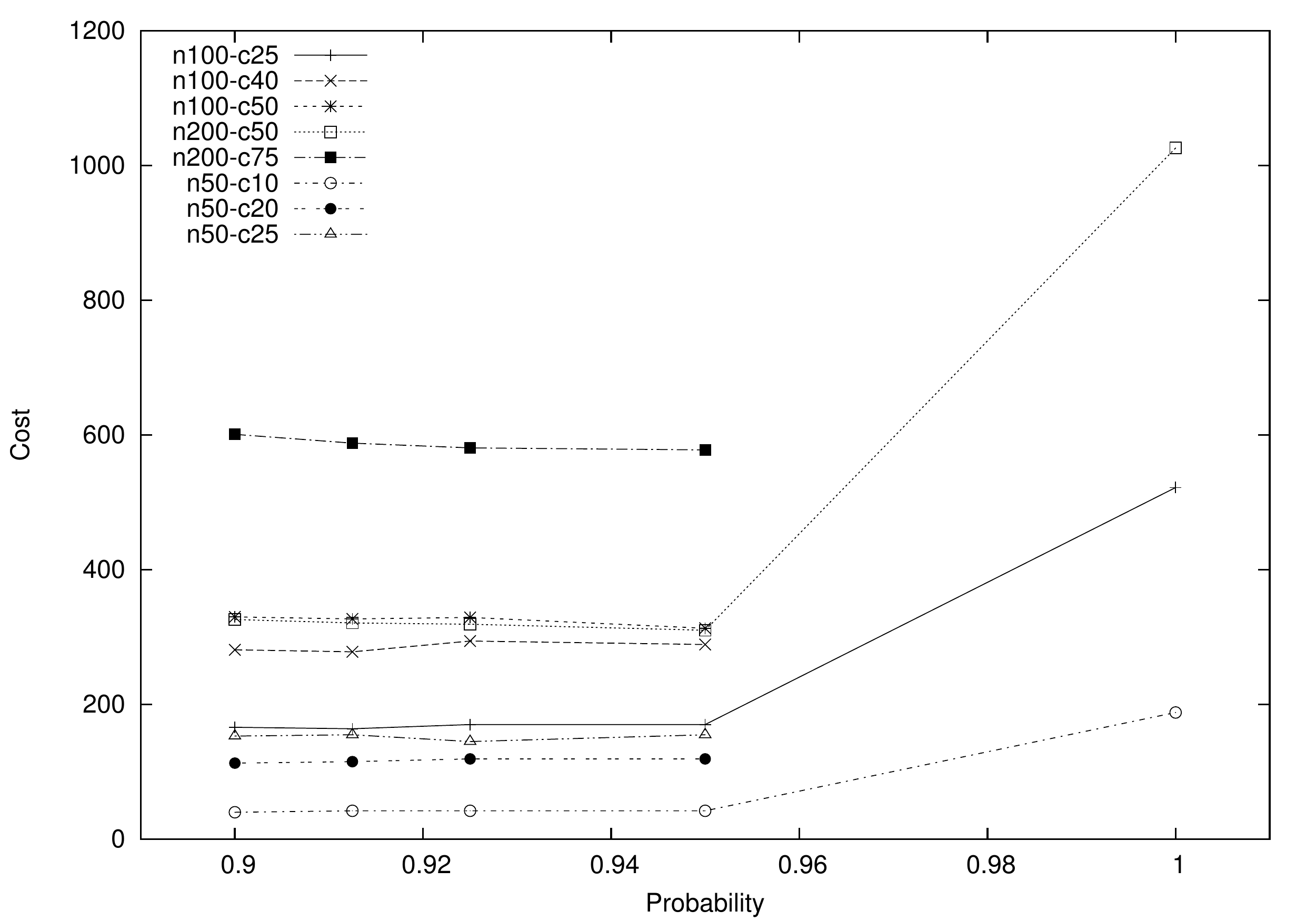}
\caption{Minimum cost for various edge selection probability values in \emph{ACO-ACSP}.}
\label{Probability-Cost-Min}
\end{figure}

\begin{figure} [h!]
\centering
\includegraphics[width=\columnwidth] {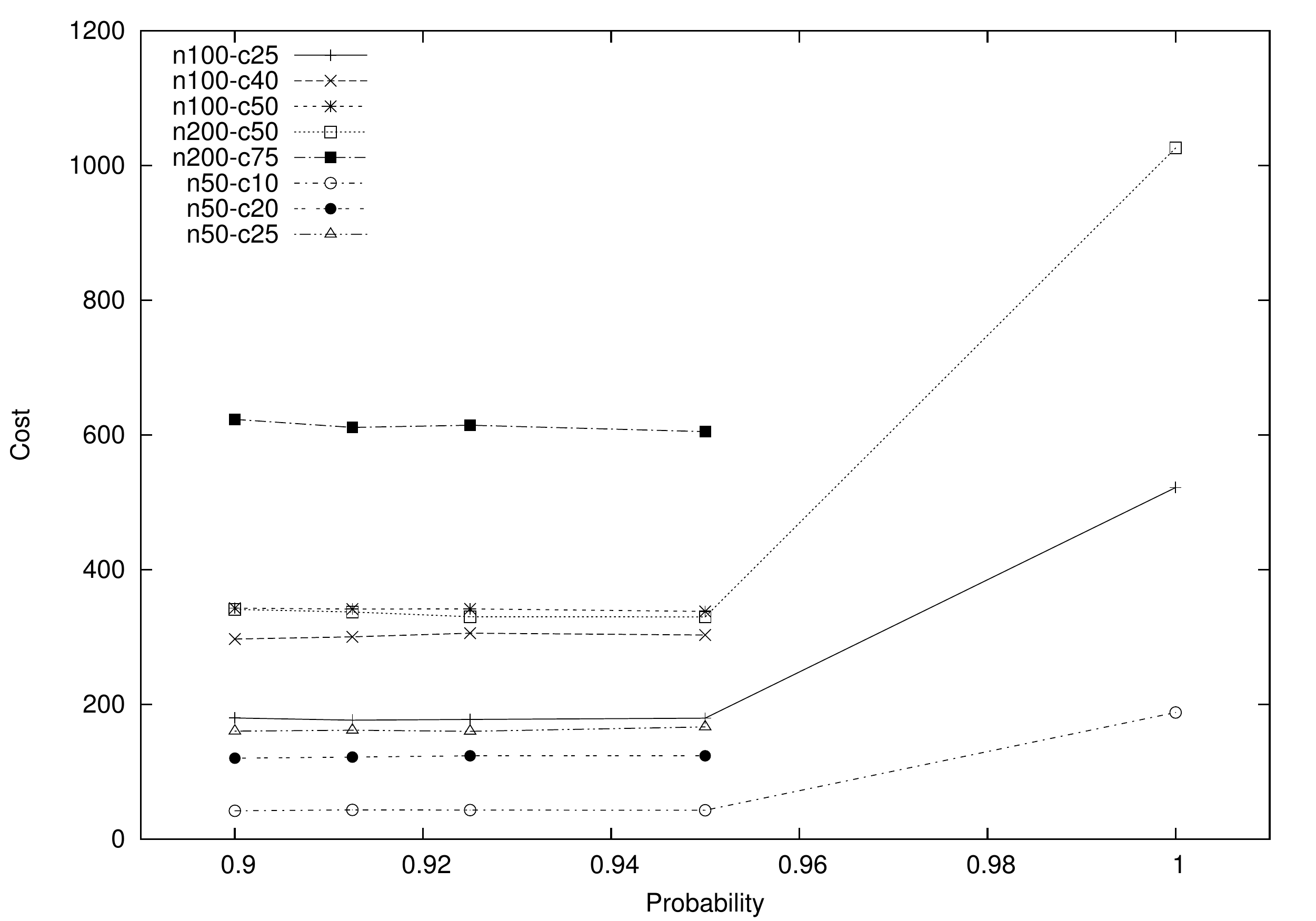}
\caption{Average cost for various edge selection probability values in \emph{ACO-ACSP}.}
\label{Probability-Cost-Avg}
\end{figure}

\begin{figure} [h!]
\centering
\includegraphics[width=\columnwidth] {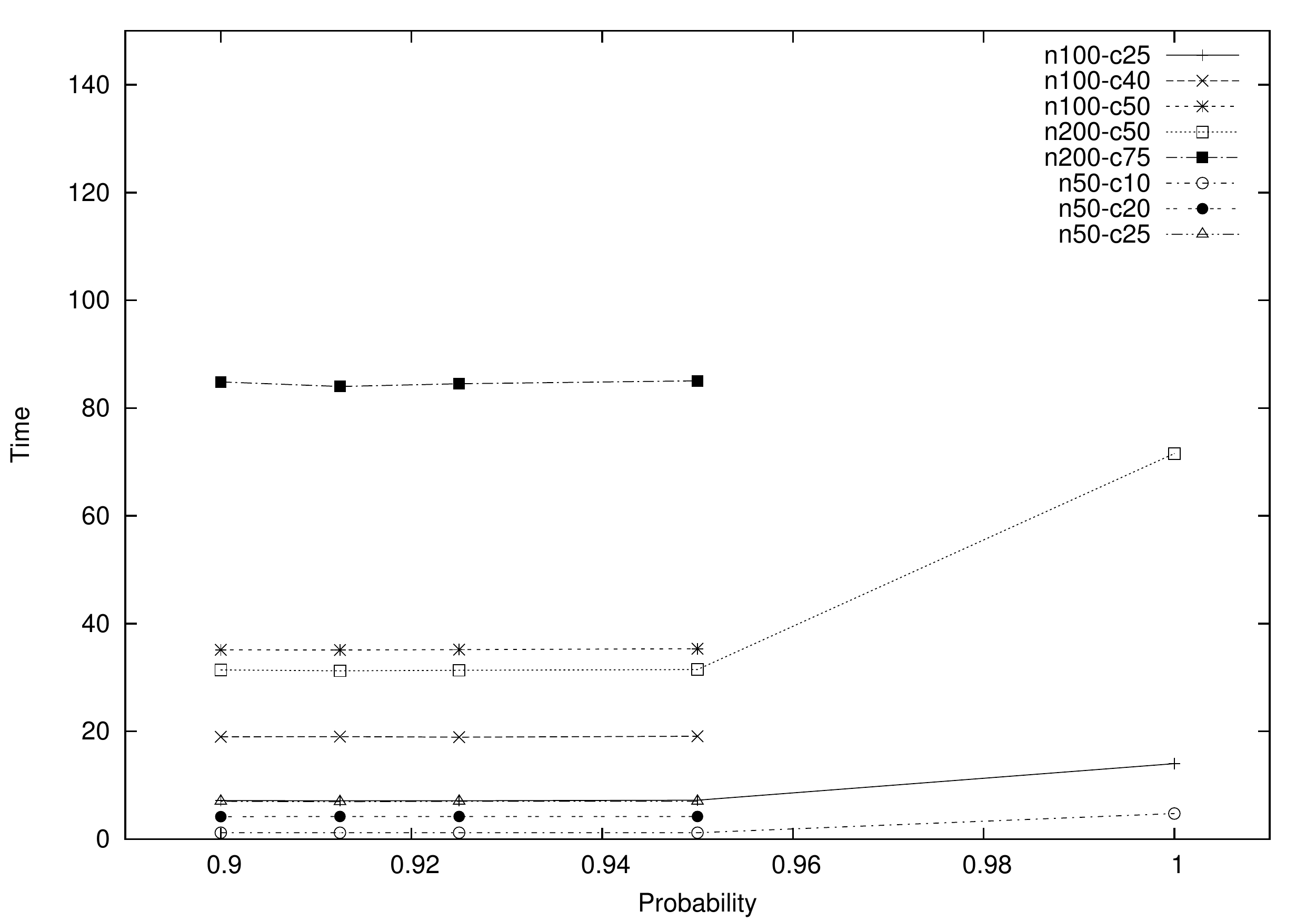}
\caption{CPU time for calculating the average cost for various edge selection
probability values in \emph{ACO-ACSP}.}
\label{Probability-Time-Avg}
\end{figure}

\subsection{\emph{GA-ACSP}: Parameter Tuning for GA}

In \emph{GA-ACSP}, the parameters investigated are the mutation
probability, the population size, and the iteration size. The results of the
tests on the mutation probability are presented in Figures~\ref{mutation-cost-min},
\ref{mutation-cost-avg}, and \ref{mutation-time-avg}. It can be seen from the
figures that the changes on the mutation probability do not affect the cost
dramatically. Therefore, to prevent over-randomization of chromosomes,
we selected for the mutation probability a value as low as $0.1$. 

\begin{figure} [h!]
\centering
\includegraphics[width=\columnwidth] {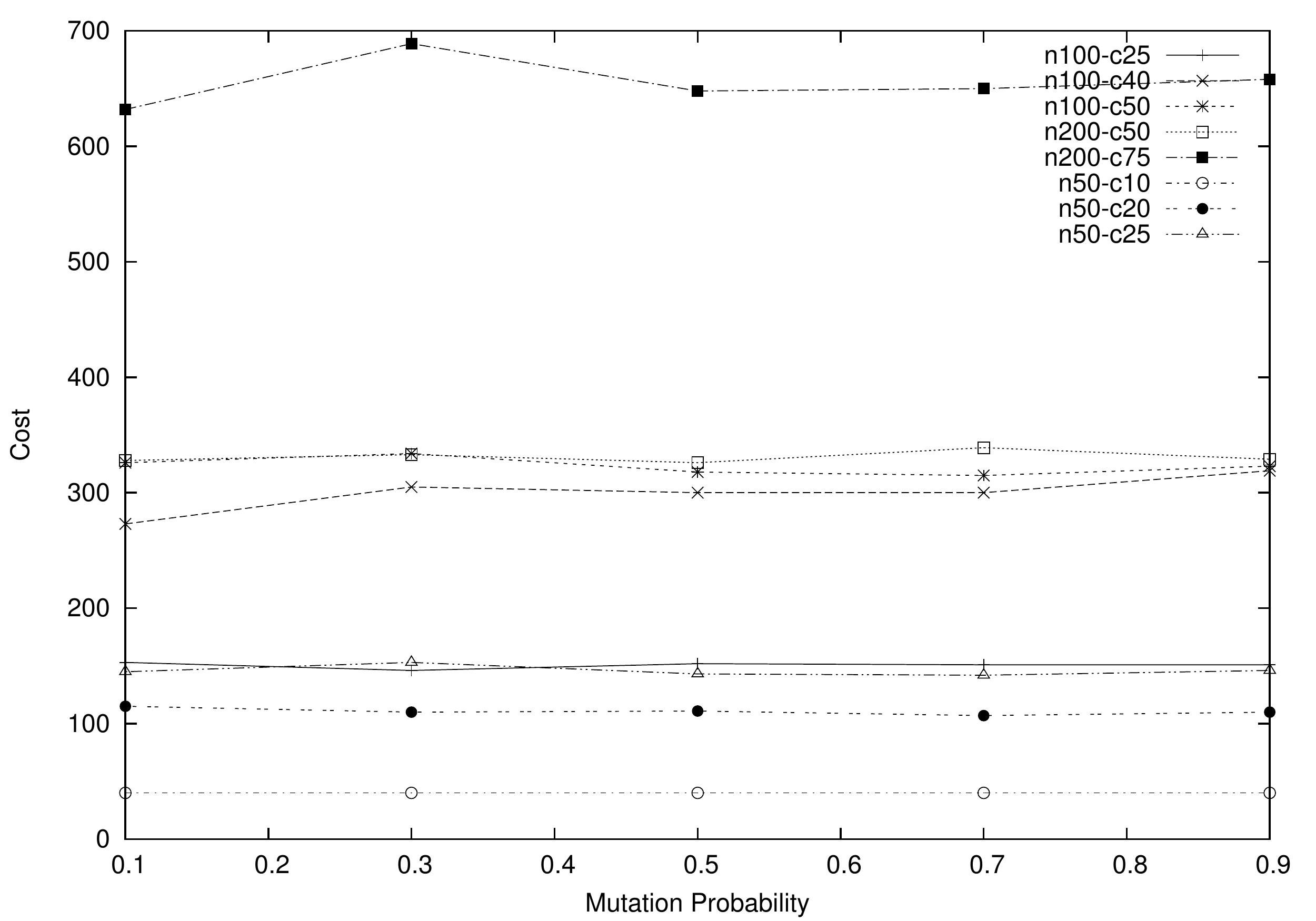}
\caption{Minimum cost for various mutation probability values in \emph{GA-ACSP}.}
\label{mutation-cost-min}
\end{figure}

\begin{figure} [h!]
\centering
\includegraphics[width=\columnwidth] {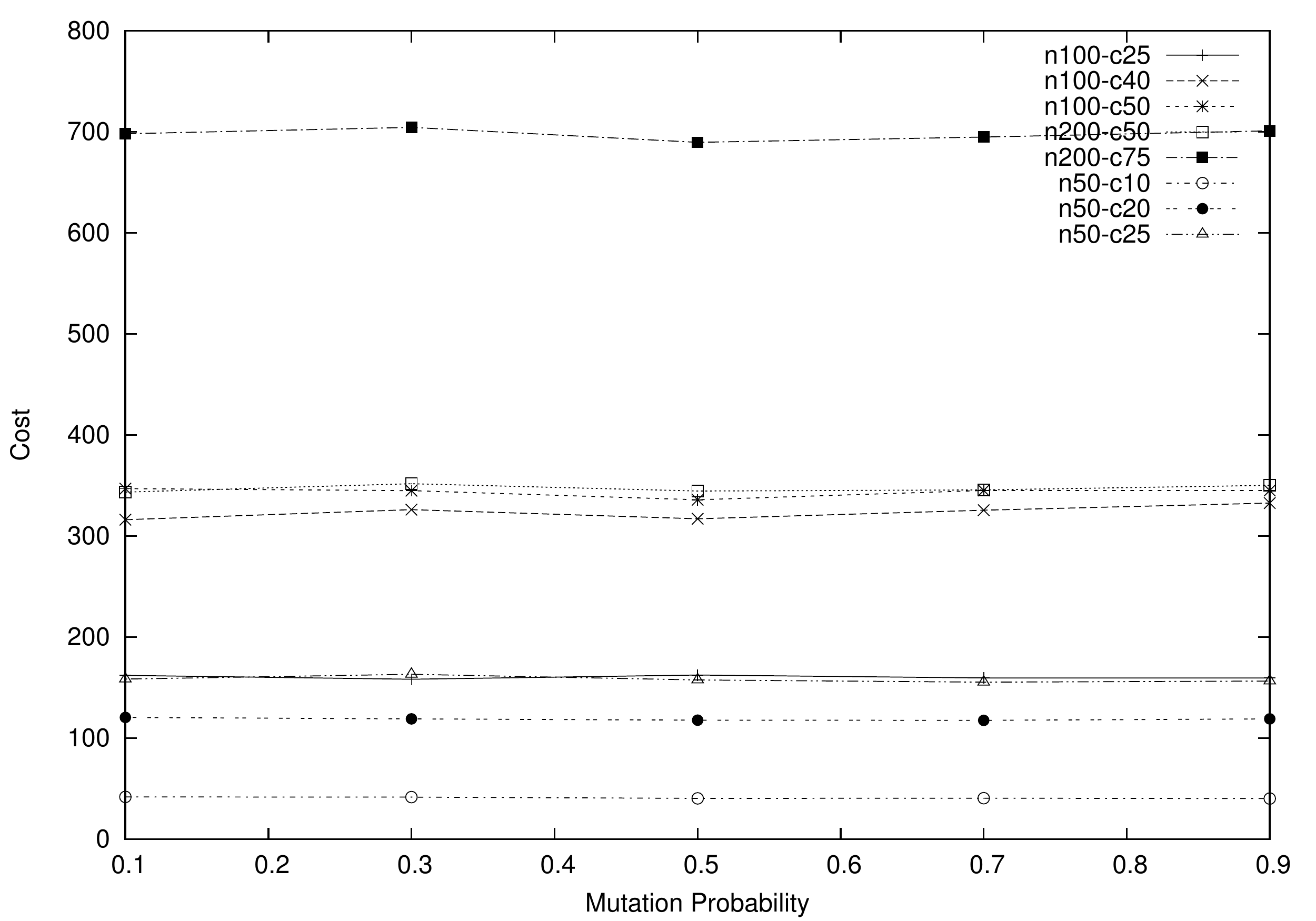}
\caption{Average cost for various mutation probability values in \emph{GA-ACSP}.}
\label{mutation-cost-avg}
\end{figure}

\begin{figure} [h!]
\centering
\includegraphics[width=\columnwidth] {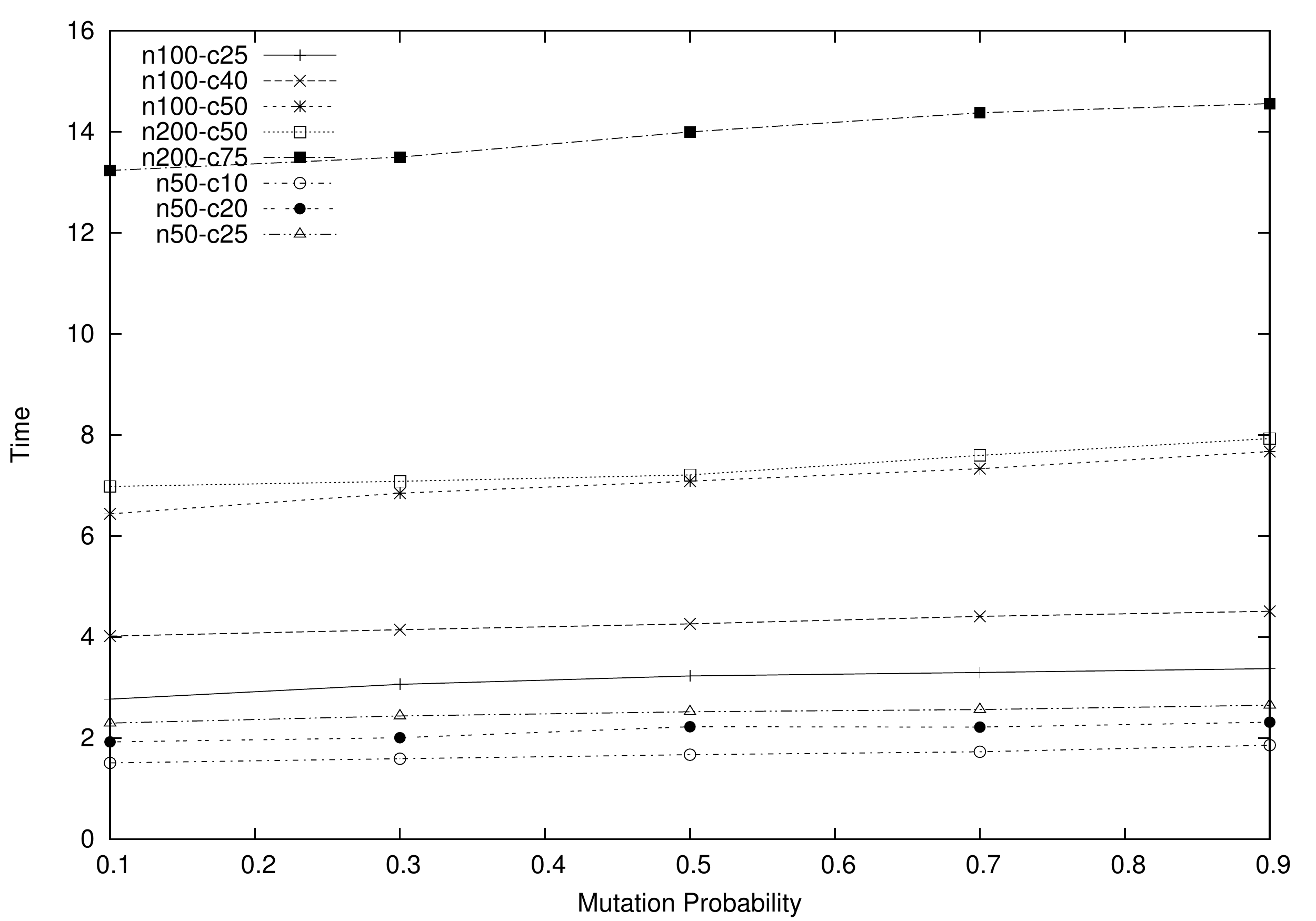}
\caption{CPU time for calculating the average cost for various mutation probability
values in \emph{GA-ACSP}.}
\label{mutation-time-avg}
\end{figure}

We also conducted experiments to find the optimal iteration size. The results of the
experiments are presented in Figures \ref{itersize-cost-min},
\ref{itersize-cost-avg}, and \ref{itersize-time-avg}. Based on the results,
we decided to select the iteration size as $6000$, as it provides the best trade-off
between the cost, and the runtime. 

\begin{figure} [h!]
\centering
\includegraphics[width=\columnwidth] {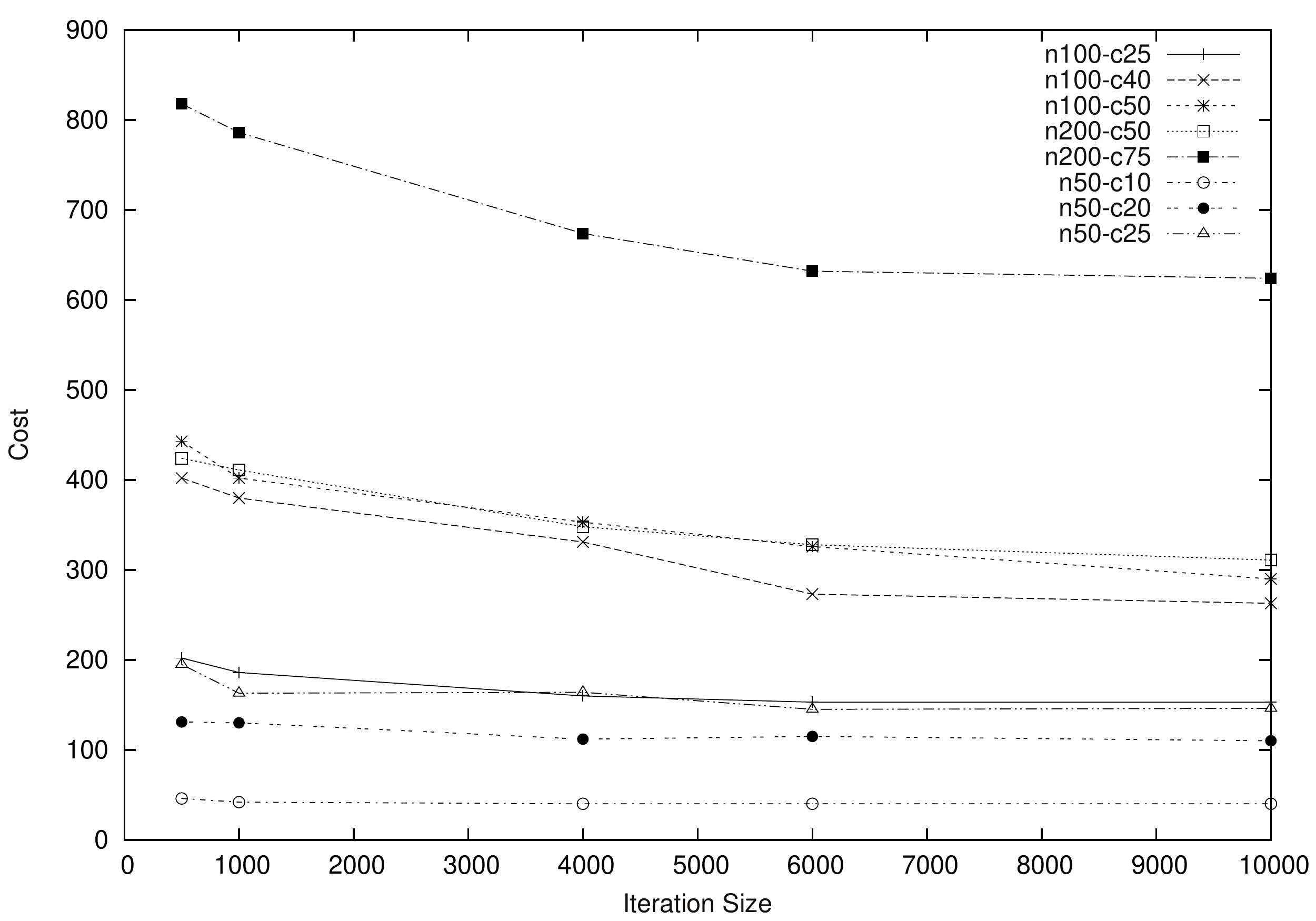}
\caption{Minimum cost for various iteration sizes in \emph{GA-ACSP}.}
\label{itersize-cost-min}
\end{figure}

\begin{figure} [h!]
\centering
\includegraphics[width=\columnwidth] {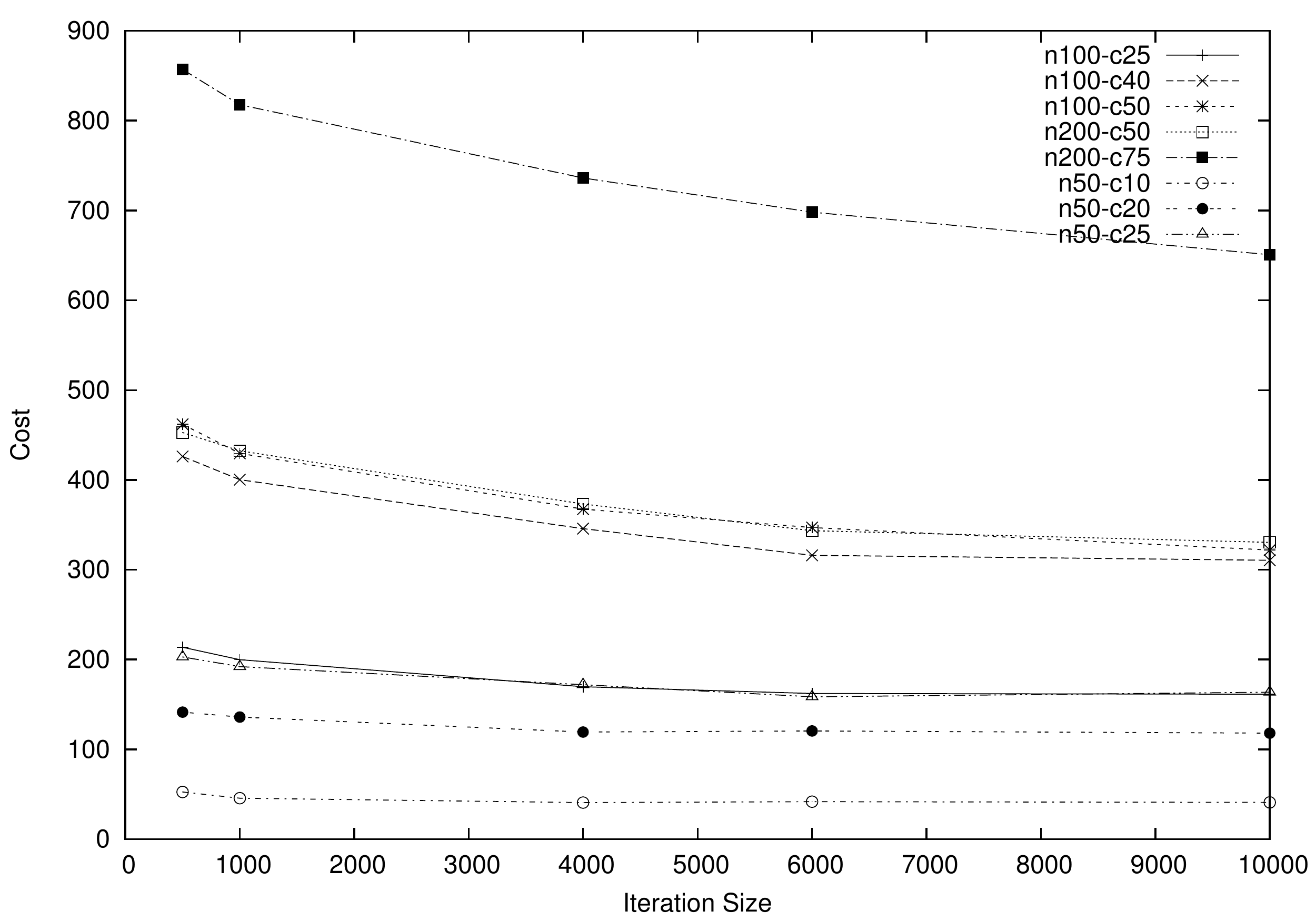}
\caption{Average cost for various iteration sizes in \emph{GA-ACSP}.}
\label{itersize-cost-avg}
\end{figure}

\begin{figure} [h!]
\centering
\includegraphics[width=\columnwidth] {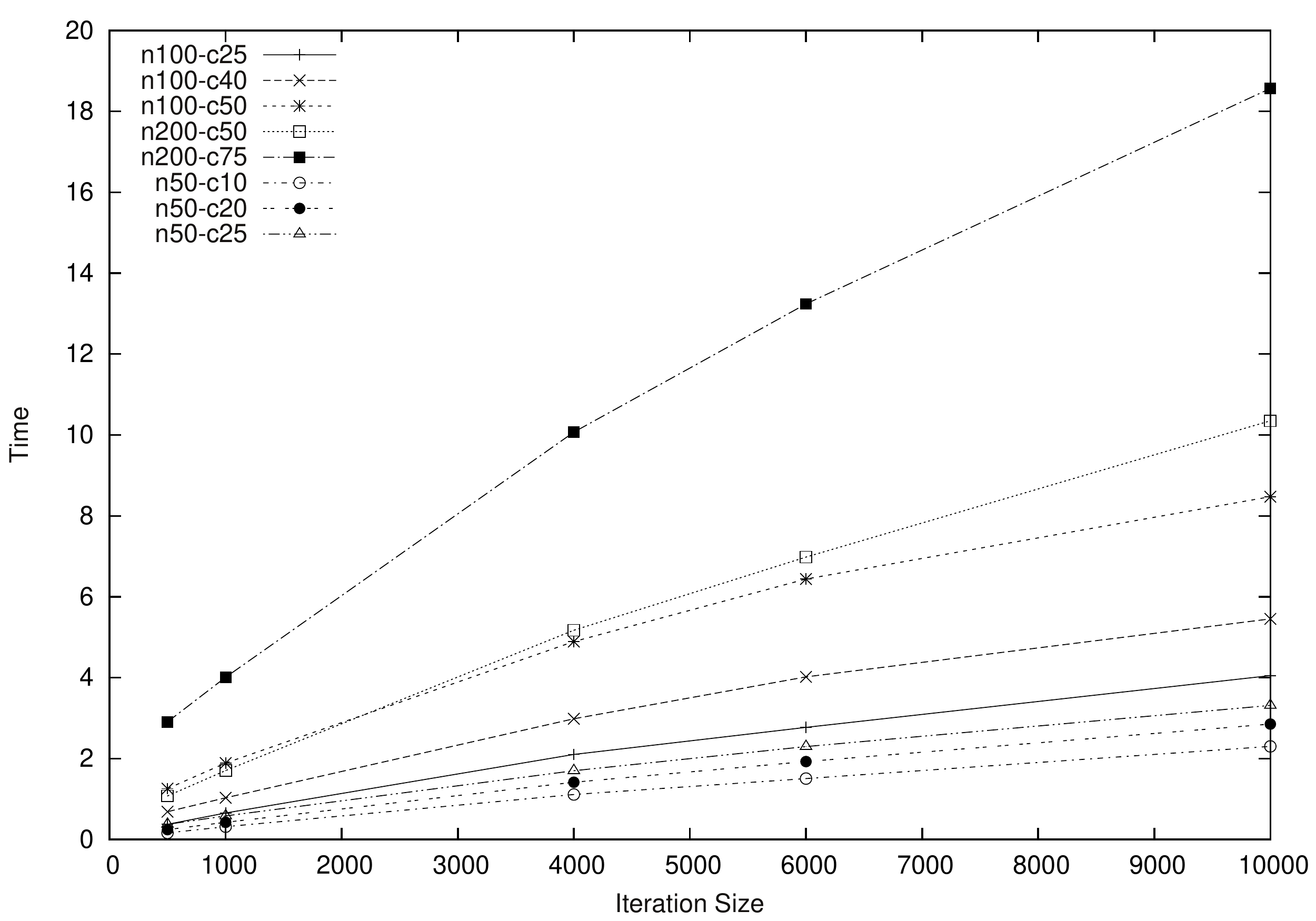}
\caption{CPU time for calculating the average cost for various iteration sizes
in \emph{GA-ACSP}.}
\label{itersize-time-avg}
\end{figure}

We also experimented on various population sizes to find the best population
size for \emph{GA-ACSP}. The results of the experiments are
presented in Figures \ref{popsize-cost-min}, \ref{popsize-cost-avg}, and
\ref{popsize-time-avg}.
Based on the results, the population size is selected as
$600$ as both the running time, and the cost at this value of population size
are lower than they are at larger population sizes.

\begin{figure} [h!]
\centering
\includegraphics[width=\columnwidth] {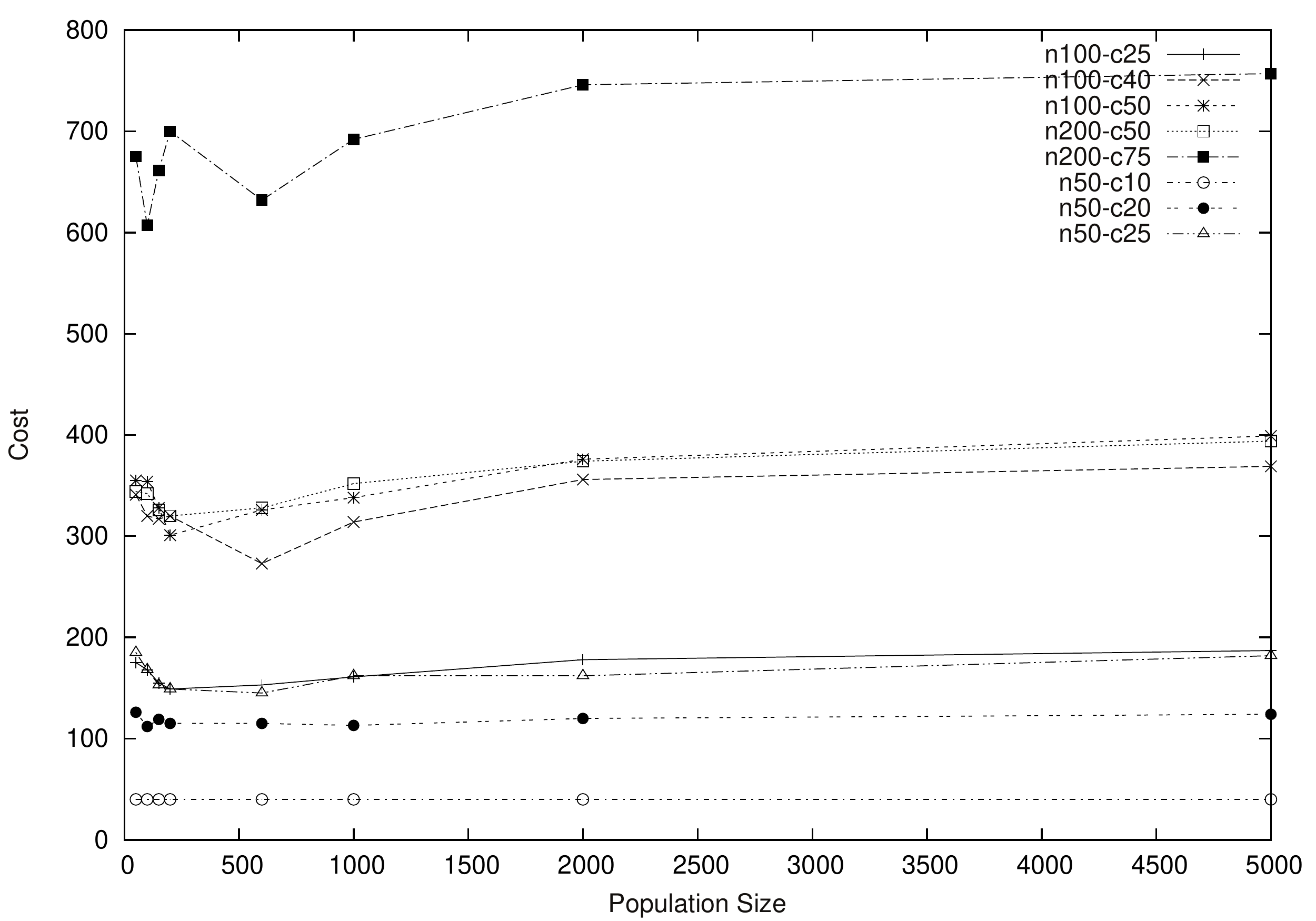}
\caption{Minimum cost for various population sizes in \emph{GA-ACSP}.}
\label{popsize-cost-min}
\end{figure}

\begin{figure} [h!]
\centering
\includegraphics[width=\columnwidth] {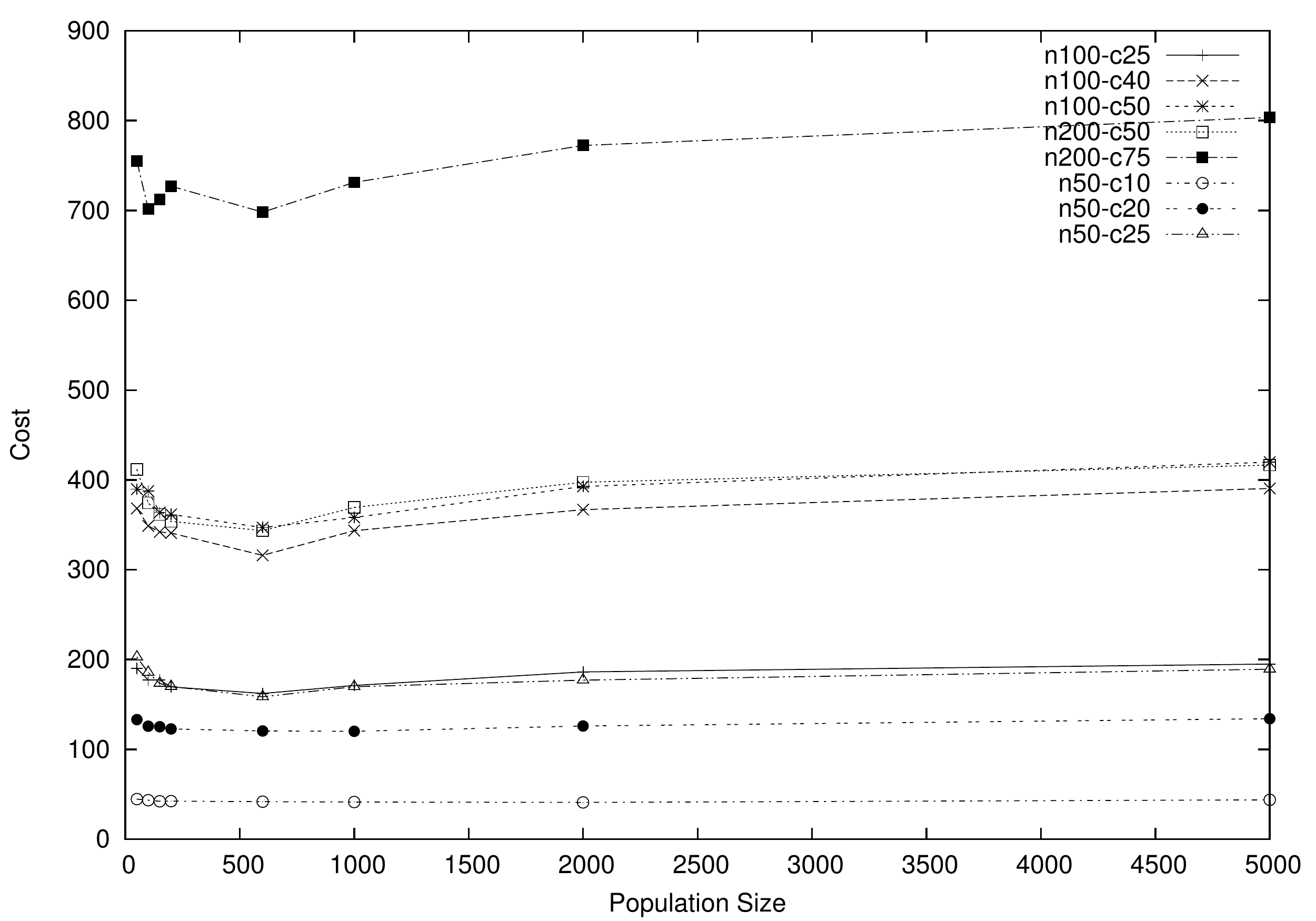}
\caption{Average cost for various population sizes in \emph{GA-ACSP}.}
\label{popsize-cost-avg}
\end{figure}

\begin{figure} [h!]
\centering
\includegraphics[width=\columnwidth] {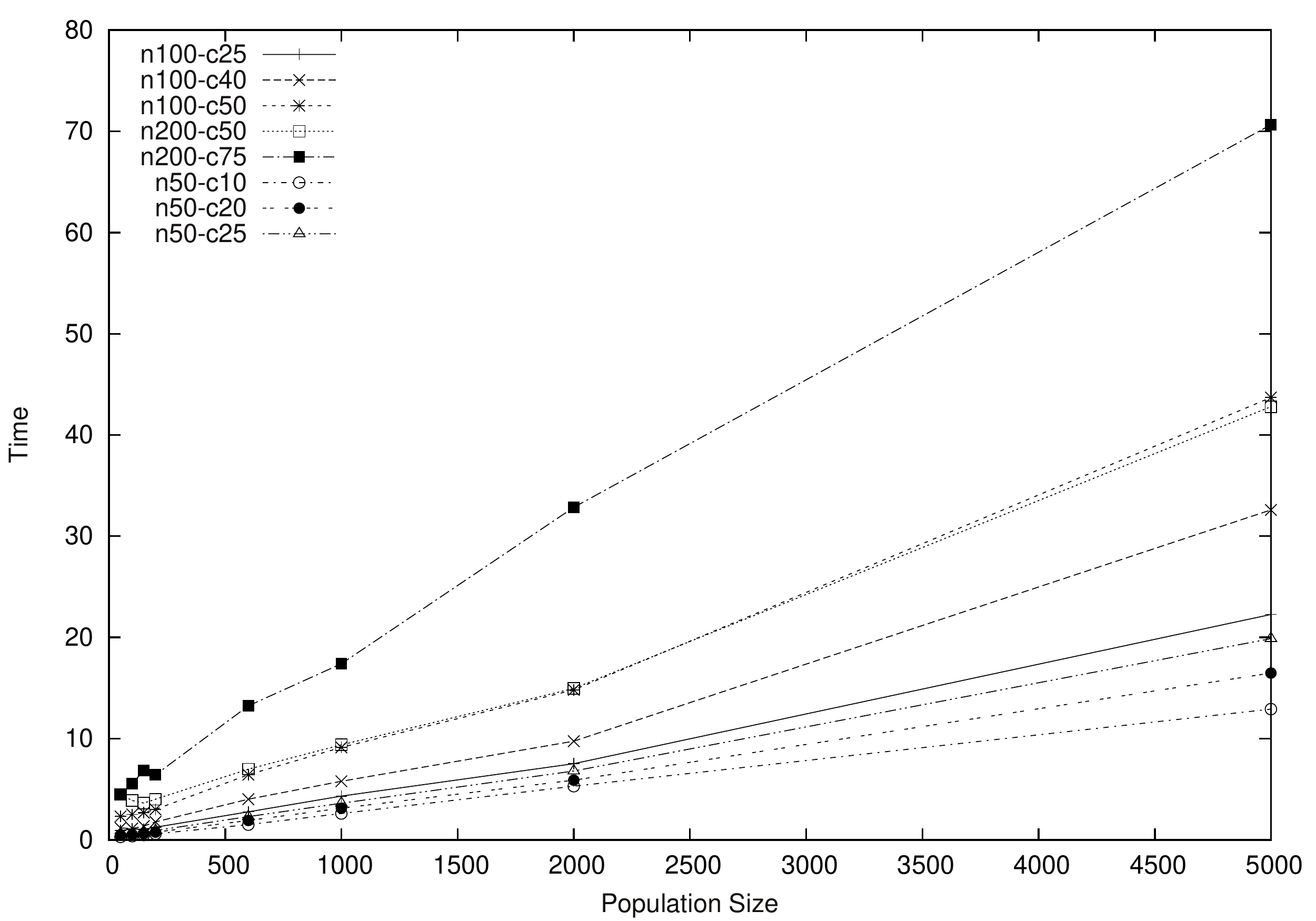}
\caption{CPU time for calculating the average cost for various population sizes
in \emph{GA-ACSP}.}
\label{popsize-time-avg}
\end{figure}

\subsection{Comparing the Heuristic Algorithms} \label{sec:comparison}

In this section, we first present the performance of the LP relaxation
based heuristics, namely
\emph{$\text{LP}_{\text{x}}\text{ACSP}$},
\emph{$\text{LP}_{\text{f}}\text{ACSP}$},
and \emph{$\text{LP}_{\text{f/x}}\text{ACSP}$}
in Figure \ref{fig:ILPComparison-Table}.
The results are reported in proportion to
the optimal values obtained via the ILP formulation.
Next, we compare
the performance of the metaheuristic algorithms,
\emph{SA-ACSP},
\emph{ACO-ACSP},
and \emph{GA-ACSP}.
The results for them are presented, again in proportion to the optimal values,
in Figure~\ref{fig:Comparison-Table}.
For each individual metaheuristic algorithm, in these tests,
the best parameter values discovered are used.
We use randomly generated graphs for the types presented
in Table \ref{table:nonlin}.

\renewcommand\arraystretch{1.4}

\begin{figure*}[th!]
\centering
\begin{tabular}{ c | c | c c | c c | c c}
\hline\hline
\multicolumn{1}{c|}{}& \multicolumn{1}{c|}{\emph{}}
& \multicolumn{2}{c|}{\emph{$\text{LP}_{\text{x}}\text{ACSP}$}} & \multicolumn{2}{c|}{\emph{$\text{LP}_{\text{f}}\text{ACSP}$}} & \multicolumn{2}{c}{\emph{$\text{LP}_{\text{f/x}}\text{ACSP}$}}\\
Graph & ILP & x & Time & f & Time & $f/x$ & Time \\
Name  &     &   &      &   &      &       &      \\
\hline
n50-c10 & 40 & 3.15 & 1.763 & 1.15 & 1.03 & 2.075 & 0.967\\
n50-c20 & 101 & 1.8614 & 2.543 & 1.1584 & 1.357 & 1.1881 & 1.201\\
n50-c25 & 132 & 1.9015 & 2.761 & 1.3182 & 1.794 & 1.3636 & 1.186\\
n100-c25 & 138 & 2.1449 & 5.803 & 1.2101 & 3.136 & 1.5145 & 3.37\\
n100-c40 & 220 & 1.9955 & 8.986 & 1.3227 & 4.181 & 1.4591 & 4.056\\
n100-c50 & 233 & 1.8112 & 8.686 & 1.1631 & 4.014 & 1.1202 & 3.526\\
n200-c50 & 223 & 2.3498 & 40.872 & 1.3946 & 23.452 & 1.6906 & 23.556\\
n200-c75 & 399 & 1.9599 & 51.767 & 1.3258 & 30.716 & 1.4837 & 23.166\\
\hline
\end{tabular}
\caption{Comparison of \emph{ILP}, \emph{$\text{LP}_{\text{x}}\text{ACSP}$}, \emph{$\text{LP}_{\text{f}}\text{ACSP}$}, and \emph{$\text{LP}_{\text{f/x}}\text{ACSP}$}.}
\label{fig:ILPComparison-Table}
\end{figure*}

\begin{figure*}[th!]
\centering
\begin{tabular}{ c | c c c | c c c | c c c}
\hline\hline
\multicolumn{1}{c|}{}& \multicolumn{3}{c|}{\emph{SA-ACSP}}
& \multicolumn{3}{c|}{\emph{ACO-ACSP}} & \multicolumn{3}{c}{\emph{GA-ACSP}}\\
Graph & Average & Minimum & Average & Average & Minimum & Average & Average & Minimum & Average \\
Name & Cost & Cost & Time & Cost & Cost & Time & Cost & Cost & Time \\
\hline
n50-c10 & 1 & 1 & 33.4153 & 1.0575 & 1 & 2.32152 & 1.0725 & 1.05 & 0.422245 \\
n50-c20 & 1.1386 & 1.1089 & 47.0029 & 1.1911 & 1.1683 & 8.15708 & 1.2267 & 1.1188 & 0.546315 \\
n50-c25 & 1.1720 &  1.1212 & 55.6508 & 1.1758 & 1.1212 & 13.8652 & 1.3303 & 1.2197 & 0.716446 \\
n100-c25 & 1.2319 & 1.1812 & 203.28 & 1.2572 & 1.1739 & 14.0686 & 1.2167 & 1.1521 & 0.957005 \\
n100-c40 & 1.3264 & 1.2182 & 260.757 & 1.3345 & 1.2773 & 38.0068 & 1.5386 & 1.4045 & 1.30289 \\
n100-c50 & 1.3914 & 1.3176 & 334.878 & 1.4039 & 1.2918 & 69.8665 & 1.5588 & 1.4464 & 2.30258 \\
n200-c50 & 1.5143 & 1.4619 & 1052.87 & 1.4404 & 1.3901 & 63.4437 & 1.5771 & 1.4484 & 2.9976 \\
n200-c75 & 1.62314 & 1.5539 & 1490.63 & 1.5 & 1.4637 & 169.171 & 1.7170 & 1.5840 & 5.69891 \\
\hline
\end{tabular}
\caption{Comparison of \emph{SA-ACSP}, \emph{ACO-ACSP}, and \emph{GA-ACSP}
with their best parameters.}
\label{fig:Comparison-Table}
\end{figure*}

Based on the experimental results, it is observed that the total path length
returned by \emph{SA-ACSP} is better than that returned by \emph{ACO-ACSP}
for medium-sized graphs. In contrast, \emph{ACO-ACSP} finds lower cost paths
compared to \emph{SA-ACSP} for larger graphs. The performance of
\emph{GA-ACSP}, in terms of solution quality, is similar to the other two metaheuristics.
It has, however, a remarkable advantage in terms of time spent over the other two
on all types of graphs.

\section{Conclusion}\label{sec:conclusion}

In this paper, a novel, and generic problem, All Colors Shortest Path (\emph{ACSP})
problem, has been formulated, and computationally explored. \emph{ACSP}
has been shown to be NP-hard, and also inapproximable within a constant factor
of the optimal. An ILP formulation has been developed for \emph{ACSP}.
Various heuristic solutions have then been devised, based on iterative rounding
applied to an LP relaxation of the ILP formulation. Moreover, three different metaheuristic
solutions based on simulated annealing, ant colony optimization, and genetic algorithm have
been proposed. Through extensive simulations, an experimental evaluation of all the heuristics
have also been reported.

The study of the computational characteristics of \emph{ACSP} when the
underlying graph is restricted to be a tree is a future work. Investigation
of an approximation bound is left as an interesting open problem.

\section{Acknowledgement}
We would like to thank Mehmet Berkehan Ak\c{c}ay for the experiments regarding
ILP.

\bibliographystyle{abbrv}
\bibliography{BASE}

\begin{thebibliography}{10}

\bibitem{BM02}
A.~Behzad and M.~Modarres.
\newblock New efficient transformation of the generalized traveling salesman
  problem into traveling salesman problem.
\newblock In {\em 15th International Conference of Systems Engineering, Las
  Vegas, USA}, 2002.

\bibitem{Dorigo1997}
M.~Dorigo and L.~M. Gambardella.
\newblock Ant colony system: A cooperative learning approach to the traveling
  salesman problem.
\newblock {\em IEEE Transactions on Evolutionary Computation}, 1(1):53--66,
  1997.

\bibitem{Dorigo1996}
M.~Dorigo, V.~Maniezzo, and A.~Colorni.
\newblock {Ant System: Optimization by a Colony of Cooperating Agents}.
\newblock {\em IEEE Transactions on Systems, Man, and Cybernetics Part
  B:Cybernetics}, 26(1):29--41, 1996.

\bibitem{DHC00}
M.~Dror, M.~Haouari, and J.~Chaouachi.
\newblock Generalized spanning trees.
\newblock {\em European Journal of Operational Research}, 120(3):583--592,
  2000.

\bibitem{FLL02}
C.~Feremans, M.~Labbe, and G.~Laporte.
\newblock A comparative analysis of several formulations for the generalized
  minimum spanning tree problem.
\newblock {\em Networks}, 39(1):29--34, 2002.

\bibitem{FGT97}
M.~Fischetti, J.~J.~S. Gonzalez, and P.~Toth.
\newblock A branch-and-cut algorithm for the symmetric generalized traveling
  salesman problem.
\newblock {\em Operations Research}, 45(3):378--394, 1997.

\bibitem{GJ79}
M.~R. Garey and D.~S. Johnson.
\newblock {\em Computers and Intractability: A Guide to the Theory of
  NP-Completeness}.
\newblock W. H. Freeman, 1979.

\bibitem{Goldberg}
D.~Goldberg.
\newblock {\em Genetic Algorithms in Search, Optimization and Machine
  Learning}.
\newblock Addison-Wesley, 1989.

\bibitem{Holland}
J.~H. Holland.
\newblock {\em Adaptation in Natural and Artificial Systems}.
\newblock The University of Michigan Press, Ann Arbor, 1975.

\bibitem{IRW99}
E.~Ihler, G.~Reich, and P.~Widmayer.
\newblock Class steiner trees and vlsi-design.
\newblock {\em Discrete Applied Mathematics}, 90(1–3):173--194, 1999.

\bibitem{Kirkpatrick1983}
S.~Kirkpatrick, C.~D. Gelatt, and M.~P. Vecchi.
\newblock Optimization by simulated annealing.
\newblock {\em Science}, 220(4598):671--680, 1983.

\bibitem{L69}
H.~Labordere.
\newblock The record balancing problem: A dynamic programming solution of a
  generalized travelling salesman problem.
\newblock {\em RAIRO Operations Research B2}, pages 43--49, 1969.

\bibitem{LMN87}
G.~Laporte, H.~Mercure, and Y.~Nobert.
\newblock Generalized travelling salesman problem through n sets of nodes: the
  asymmetrical case.
\newblock {\em Discrete Applied Mathematics}, 18(2):185--197, 1987.

\bibitem{LMW93}
Y.-N. Lien, E.~Ma, and B.~W.-S. Wah.
\newblock Transformation of the generalized traveling-salesman problem into the
  standard traveling-salesman problem.
\newblock {\em Information Sciences}, 74(1-2):177--189, 1993.

\bibitem{Metropolis}
N.~Metropolis, A.~W. Rosenbluth, M.~Rosenbluth, A.~H. Teller, and E.~Teller.
\newblock Equation of state calculations by fast computing machines.
\newblock {\em J. Chem. Phys.}, 21:1087--1092, 1953.

\bibitem{MLT95}
Y.-S. Myung, C.-H. Lee, and D.-W. Tcha.
\newblock On the generalized minimum spanning tree problem.
\newblock {\em Networks}, 26(4):231--241, 1995.

\bibitem{P04}
P.~C. Pop.
\newblock New models of the generalized minimum spanning tree problem.
\newblock {\em Journal of Mathematical Modelling and Algorithms},
  3(2):153--166, 2004.

\bibitem{PKS06}
P.~C. Pop, W.~Kern, and G.~Still.
\newblock A new relaxation method for the generalized minimum spanning tree
  problem.
\newblock {\em European Journal of Operational Research}, 170(3):900--908,
  2006.

\bibitem{PSK05}
P.~C. Pop, G.~Still, and W.~Kern.
\newblock An approximation algorithm for the generalized minimum spanning tree
  problem with bounded cluster size.
\newblock In H.~Broersma, M.~Johnson, and S.~Szeider, editors, {\em ACiD},
  volume~4 of {\em Texts in Algorithmics}, pages 115--121. King's College,
  London, 2005.

\end{thebibliography}
\end{document}